%% file: main.tex
\documentclass[twocolumn]{autart}   
\usepackage{amsmath,amssymb,amsfonts}
\usepackage{graphicx}
\usepackage{color}
\usepackage{bbding}
\usepackage{wrapfig}
\usepackage{pifont}
\usepackage{cases}
\usepackage{subfigure}
\usepackage{balance}
\usepackage[T1]{fontenc}
\usepackage{cite}
\usepackage{natbib}
\usepackage[justification=centering]{caption}
\usepackage{graphicx}
\usepackage{algorithm}
\usepackage{algorithmic}
\usepackage{mathtools}

\usepackage{bm}
\usepackage{listings}

\newtheorem{theorem}{Theorem}
\allowdisplaybreaks
\newtheorem{proposition}{Proposition}
\newtheorem{lemma}{Lemma}
\newtheorem{example}{Example}

\newtheorem{assumption}{Assumption}
\newtheorem{problem}{Problem}
\newcommand{\oomit}[1]{}

\begin{document}

\begin{frontmatter}

\title{Sufficient and Necessary Continuous Barrier-like Conditions for Discrete-Time Stochastic Reach-Avoid Verification} 

\thanks[footnoteinfo]{This work is funded by the National Research Foundation, Singapore, under its RSS Scheme (NRF-RSS2022-009) and the Basic Research Program of  Institute of Software, CAS (Grant No. ISCAS-JCMS-202302). Corresponding author Bai Xue. Tel. +8613552483249.}

\author[]{Bai Xue}\ead{xuebai@ios.ac.cn}  

\address{KLSS, Institute of Software, Chinese Academy of Sciences, Beijing, China}

\begin{keyword}                          
Discrete-time Stochastic Systems; Reach-avoid Verification; Continuous/Polynomial Barrier Functions; Completeness           
\end{keyword}

\begin{abstract}                          
This paper develops necessary and sufficient barrier-like characterizations using continuous barrier functions for infinite-horizon reach-avoid verification of discrete-time stochastic systems. Existing results establish necessary and sufficient conditions in terms of functional inequalities involving measurable or lower semicontinuous barrier functions. However, the limited regularity of such functions may hinder their numerical approximation and computational synthesis. Building on our previous barrier-like condition for finite-horizon reach-avoid verification, we show that this condition can also be used for infinite-horizon reach-avoid verification and, under a uniform absolute continuity condition on the transition kernels, admits a continuous barrier function whenever the exact reach-avoid probability is strictly larger than the prescribed threshold for every state in the initial set. We further show that the resulting continuous barrier function can be uniformly approximated by a polynomial one while preserving the required barrier-like conditions. For polynomial systems, we formulate these conditions as polynomial positivity constraints over compact basic semialgebraic sets. Putinar's Positivstellensatz then converts the positivity conditions into sum-of-squares (SOS) certificates, yielding semidefinite programming (SDP) formulations for synthesizing polynomial barrier functions. We establish both soundness and completeness of the resulting SOS-based procedure. Finally, two numerical examples illustrate the theoretical results and demonstrate the resulting SDP approach.
\end{abstract}

\end{frontmatter}

\input{introduction}

\input{pre}

\input{conditions}
\input{sos}

\input{examples}

\input{conclusion}
\input{ack}
\bibliographystyle{unsrt}
\bibliography{ref}

\end{document}

%% file: introduction.tex
\section{Introduction}
\label{sec:introduction}
Reach-avoid verification is a fundamental problem in the analysis of autonomous and learning-enabled systems operating under uncertainty \cite{sullivan2015introduction,thiebes2021trustworthy}. In
many applications, including robotics, autonomous vehicles, and
cyber-physical systems, system dynamics are influenced by stochastic disturbances, making the behavior of the system inherently probabilistic. This has motivated the development of probabilistic verification methods that provide formal guarantees on the probability of satisfying reach-avoid specifications \cite{baier2008principles,franzle2008stochastic,abate2008probabilistic,abate2010approximate}. 

Given a collection of sets, including a safe set, an initial set, and a target set, reach-avoid verification for stochastic systems concerns whether the probability that the system, starting from the initial set, reaches the target set while remaining within the safe set until reaching the target is at least a prescribed threshold. We refer to this probability as the \emph{reach-avoid probability}. Depending on whether the target-hitting time is bounded or unbounded, reach-avoid verification is classified as finite-horizon or infinite-horizon verification. In this work, we focus on infinite-horizon reach-avoid verification.

Among existing methods, barrier-function-based methods are particularly attractive for infinite-horizon reach-avoid verification because they formulate the verification problem as the existence of barrier functions satisfying barrier-like conditions expressed as functional inequalities, without explicitly enumerating system trajectories. These functional inequalities are often tractable to verify or synthesize computationally; see, e.g., \cite{prajna2007framework,prajna2007convex}. However, most barrier-function-based methods provide only sufficient barrier-like conditions for infinite-horizon reach-avoid verification, resulting in an incomplete characterization. That is, even when a system satisfies the reach-avoid specification, the existence of a barrier function satisfying the corresponding barrier-like conditions is not guaranteed. Consequently, a purely sufficient characterization cannot, in general, distinguish between a genuine violation of the specification and a failure to find a suitable barrier function, since the nonexistence of a barrier function does not imply that the specification is violated. This limitation has motivated recent efforts toward converse characterizations. Recently, \cite{xue2026sufficient} proposed a barrier-like condition for infinite-horizon reach-avoid verification of discrete-time stochastic systems by relaxing the Bellman equation. It was further shown that, when the exact reach-avoid probability is strictly larger than the prescribed threshold and the initial set is a singleton, this sufficient condition is also necessary. This converse result was subsequently extended in \cite{xue2026converse} to discrete-time stochastic systems with a compact initial set. That is, whenever the exact reach-avoid probability is uniformly strictly larger than the prescribed threshold over the initial set, there exists a lower semicontinuous (l.s.c.) barrier function satisfying the corresponding barrier-like condition. This converse result establishes that, at the level of general l.s.c. barrier functions, the barrier formulation itself need not introduce conservatism. 

However, an l.s.c. function may exhibit sharp spatial variations, discontinuities, and nondifferentiable features, which complicate numerical approximation and computational synthesis. In particular, a discontinuous l.s.c. function cannot, in general, be uniformly approximated arbitrarily closely by continuous functions. Consequently, directly replacing an l.s.c. barrier function by a continuous finite-dimensional parameterization does not automatically preserve the barrier-like inequalities or their associated guarantees. This motivates the study of continuous barrier functions, whose regularity is more amenable to standard optimization and learning-based synthesis methods.

Obtaining such a continuous strengthening is nontrivial because the barrier-like conditions couple the barrier function with the system's stochastic transition operator. Moreover, the desired regularization cannot rely on uniform approximation of the original l.s.c. barrier function, since such approximation may be impossible. Instead, we resort to a different barrier-like condition, established in Theorem 4 of \cite{xue2024finite} for finite-horizon reach-avoid verification. We first show that this condition also provides a sufficient condition for infinite-horizon reach-avoid verification. We then show that, under a uniform absolute continuity condition on the transition kernels, whenever the exact reach-avoid probability is uniformly strictly larger than the prescribed threshold over the initial set, there exists a continuous barrier function satisfying this condition. The continuous barrier function is constructed from the l.s.c. barrier function established in \cite{xue2026converse}. The resulting continuous converse characterization therefore provides a genuine regularity improvement over \cite{xue2026converse}. The continuous characterization further provides a direct route to polynomial barrier functions. On a compact computational domain, continuous functions can be uniformly approximated by polynomials. Thus, a continuous barrier function can be approximated by a polynomial while preserving the barrier-like inequalities and the reach-avoid guarantee. When the system has the polynomial structure, these conditions can be formulated as polynomial positivity constraints and subsequently handled using SOS programs. This establishes a direct connection between the analytical converse characterization and computational function synthesis. Finally, we demonstrate the effectiveness of the proposed approach through two numerical examples using SDP tools.

The main contributions are summarized as follows.
\begin{enumerate}
\item We establish necessary and sufficient barrier-like conditions for infinite-horizon reach-avoid verification of discrete-time stochastic systems using \emph{continuous} barrier functions under appropriate assumptions.
\item We show that the continuous barrier functions in the converse characterization can be approximated by polynomial barrier functions while preserving the barrier-like inequalities and the resulting reach-avoid guarantee.
\item We develop an SOS-based computational framework for synthesizing polynomial barrier functions for polynomial systems over compact basic semialgebraic regions, and establish the soundness and completeness of the resulting SOS hierarchy.
\end{enumerate}

\section*{Related Work}

Formal verification of stochastic systems generally provides either qualitative guarantees, such as almost-sure satisfaction of specifications \cite{majumdar2024necessary,abate2024stochastic,kordabad2026certificates}, or quantitative guarantees, which seek probabilistic bounds on the satisfaction or violation of such specifications \cite{kushner1967stochastic,prajna2007framework,chakarov2013probabilistic,wang2021safety,takisaka2021ranking,yu2023safe,abate2025quantitative,henzinger2025supermartingale,xue2026quantitative,xue2026new,chen2026construction}. In this setting, barrier-function-based methods provide a particularly useful framework for obtaining tractable verification guarantees.

Barrier-function-based methods were initially proposed for deterministic systems as a formal approach to safety verification \cite{prajna2004safety}. Subsequent work extended and refined these methods to address a broader range of temporal verification and control problems \cite{prajna2007convex,kong2013exponential,anand2022k,ames2019control}. Since practical systems are often subject to stochastic disturbances, barrier-function-based methods have also been extended to stochastic systems. In particular, they have been developed for finite- and infinite-horizon safety verification of both continuous- and discrete-time stochastic systems through the use of Ville's inequality \cite{ville1939etude} (see, e.g., \cite{prajna2007framework,steinhardt2012finite,feng2020unbounded,santoyo2021barrier,anand2022k,mathiesen2022safety,zhi2024unifying,laurenti2025unifying}) and relaxations of associated equations (see, e.g., \cite{yu2023safe,xue2026sufficient,chen2026construction}). These developments have also enabled controller synthesis through control barrier functions \cite{jagtap2020formal,sarkar2020high,wang2021safety,kordabad2024control}.

Beyond safety verification, which primarily concerns avoiding unsafe sets, reach-avoid verification considers the probability of eventually reaching a target set while satisfying prescribed safety constraints. By explicitly accounting for both safety and task completion, reach-avoid verification provides a natural framework for analyzing the reliability of stochastic autonomous and learning-enabled systems and has therefore attracted increasing attention in recent years. For discrete-time stochastic systems, a barrier-like condition was introduced for computing $p$-reach-avoid sets in \cite{xue2021reach}. This condition is derived by relaxing a system of equations whose solution characterizes the exact reach-avoid probability, and can be straightforwardly extended to infinite-horizon reach-avoid verification \cite{xue2026sufficient,cao2025comparative}. This framework was subsequently extended to continuous-time stochastic systems in \cite{xue2024}. Later, reach-avoid supermartingales were introduced in \cite{vzikelic2023learning,vzikelic2023compositional} to certify reach-avoid specifications and facilitate controller synthesis. A comparison of these barrier-like conditions for infinite-horizon reach-avoid verification of discrete-time stochastic systems was presented in \cite{cao2025comparative}. For continuous-time stochastic systems, stochastic Lyapunov-barrier functions were proposed in \cite{meng2022sufficient} to establish sufficient conditions for probabilistic reach-avoid-stay specifications. For finite-horizon specifications, barrier-like conditions were developed in \cite{xue2024finite} and \cite{xue2026new,xue2025refined} for discrete-time and continuous-time systems, respectively.

Despite these advances, converse results for stochastic systems---namely, results establishing that barrier-like conditions are also necessary---remain comparatively underexplored. This contrasts with the more extensive development of converse results for deterministic continuous-time systems modeled by ordinary differential equations; see, e.g., \cite{lin1996smooth,prajna2005necessity,prajna2007convex,wisniewski2015converse,ratschan2018converse,liu2021converse,maghenem2022converse,meng2022smooth,li2026converse}. For discrete-time stochastic systems, necessary and sufficient conditions for almost-sure reachability were established in \cite{majumdar2024necessary}. In probabilistic program verification, termination can be reduced to an almost-sure reachability problem for terminal states, and sound and complete proof rules for both qualitative (i.e., almost-sure) and quantitative termination were recently developed in \cite{majumdar2025sound}. In addition, a complete characterization of reachability certificates for discrete-time stochastic linear systems was provided in \cite{kordabad2026certificates}. For infinite-horizon safety and reach-avoid verification of discrete-time stochastic systems, necessary and sufficient barrier-like conditions were established in \cite{xue2026sufficient} via Bellman relaxations. However, the converse characterization for reach-avoid verification was restricted to a singleton initial set. This limitation was addressed in \cite{xue2026converse}, where a set-based converse characterization was established for infinite-horizon reach-avoid verification. Specifically, whenever the reach-avoid probability is uniformly strictly larger than the prescribed threshold over a compact initial set, there exists an l.s.c. barrier function satisfying the corresponding barrier-like conditions. Although lower semicontinuity is sufficient for the theoretical characterization, it is less amenable to numerical synthesis and finite-dimensional approximation. 

\noindent{\textbf{From l.s.c. to Continuous Functions.}} In this work, we bridge this gap for infinite-horizon reach-avoid verification of discrete-time stochastic systems. Building on the l.s.c. barrier function established in \cite{xue2026converse} and the barrier-like condition introduced in \cite{xue2024finite}, we show that, under a uniform absolute continuity condition on the transition kernel, there exists a continuous barrier function satisfying the latter condition. We then invoke the density of polynomials on compact sets to obtain polynomial barrier functions, thereby establishing a direct connection between the analytical converse characterization and SOS-based computational synthesis for polynomial systems. Thus, the key advance over \cite{xue2026converse} is not the existence of a barrier function itself, but the strengthening of its regularity from lower semicontinuity to continuity and, subsequently, to a computationally tractable polynomial form.

The remainder of the paper is organized as follows. Section \ref{sec:pre} introduces the discrete-time stochastic system, the infinite-horizon reach-avoid specification, and the required assumptions. Section \ref{sec:cbc} develops the continuous barrier-like characterization, while Section \ref{sec:sos} presents the SOS-based computational framework for polynomial barrier functions. Section \ref{sec:ex} provides numerical examples demonstrating the effectiveness of the proposed approach. Finally, Section \ref{sec:con} concludes the paper.

Throughout this paper, we use the following basic notions: $\mathbb{R}$ denotes the set of real values; $\mathbb{N}$ denotes the set of nonnegative integers; for sets $\Delta_1$ and $\Delta_2$, $\overline{\Delta_1}$ denotes the closure of the set $\Delta_1$, and $\Delta_1\setminus \Delta_2$ denotes the difference of sets $\Delta_1$ and $\Delta_2$, which is the set of all elements in $\Delta_1$ that are not in $\Delta_2$. For a set $\Delta \subseteq\mathbb R^n$, we denote by $C(\Delta)$
the space of real-valued continuous functions on $\Delta$, and by
$C^1(\Delta)$ the space of continuously differentiable functions on $\Delta$; $\mathbb{R}[\bm{x}]$denote the ring of real-valued polynomials in the state variables $\bm{x}$.

%% file: pre.tex
\section{Preliminaries}
\label{sec:pre}

This section introduces the stochastic discrete-time systems considered in this paper and formulates the infinite-horizon reach-avoid verification problem. We first define the system dynamics, the reach-avoid hitting time, and the reach-avoid probability. We then formulate the infinite-horizon reach-avoid verification problem and state the assumptions used throughout the paper, including a uniform absolute continuity condition on the transition kernels that controls the effect of exceptional sets arising in the continuous approximation of l.s.c. functions. Finally, we recall the necessary and sufficient barrier-like conditions involving l.s.c. barrier functions established in \cite{xue2026converse}. These conditions serve as the starting point for developing the corresponding necessary and sufficient barrier-like conditions involving continuous barrier functions in the next section.

\subsection{Discrete-Time Stochastic Systems}
\label{sub:ss}

Consider the discrete-time stochastic system
\begin{equation}
\label{systems}
\bm{x}_{k+1}
=
\bm{f}(\bm{x}_k,\bm{\theta}_k),
\qquad k\in\mathbb{N},
\end{equation}
where $\bm{x}_k\in\mathbb{R}^n$ is the state and
$\bm{\theta}_k\in\Theta\subseteq\mathbb{R}^m$ is an i.i.d. disturbance with distribution $\mathbb{P}_{\bm{\theta}}$. We denote the corresponding expectation by $\mathbb{E}_{\bm{\theta}}[\cdot]$, and equip the disturbance space $\Theta$ with its standard Borel $\sigma$-algebra.

Let $\pi=(\bm{\theta}_0,\bm{\theta}_1,\ldots)\in\Theta^\infty$ denote a disturbance realization. For an initial state
$\bm{x}\in\mathbb{R}^n$ and a realization $\pi\in\Theta^\infty$, let $\bm{\phi}_{\bm{x}}^\pi:\mathbb{N}\to\mathbb{R}^n$ denote the corresponding trajectory, with $\bm{\phi}_{\bm{x}}^\pi(0)=\bm{x}$.
Let $\mathcal{X}\subseteq\mathbb{R}^n$ be the safe set,
$\mathcal{X}_r\subseteq\mathcal{X}$ the target set, and
$\mathcal{X}_0\subseteq\mathcal{X}$ the initial set. For $\bm{x}\in\mathcal{X}_0$, the reach-avoid hitting time is defined by
\begin{equation}
\label{hitting}
\tau(\bm{x},\pi):=
\inf\left\{
k\in\mathbb{N} \middle|
\begin{split}
&\bm{\phi}_{\bm{x}}^\pi(k)\in\mathcal{X}_r
 ~\text{and} \\
&\bm{\phi}_{\bm{x}}^\pi(j)\in\mathcal{X}
\ \forall j\leq k
\end{split}
\right\},
\end{equation}
with the convention $\tau(\bm{x},\pi)=\infty$ if the target is never reached while the trajectory remains in the safe set.

The reach-avoid probability from $\bm{x}\in\mathcal{X}_0$ is
\begin{equation}
\label{ra_p}
\mathbb{P}_{\mathrm{RA}}(\bm{x})
:=
\mathbb{P}_{\pi}\bigl(\tau(\bm{x},\pi)<\infty\bigr),
\end{equation}
where $\mathbb{P}_{\pi}$ is the canonical product measure induced by $\mathbb{P}_{\bm{\theta}}$, and $\mathbb{E}_{\pi}[\cdot]$ denotes the corresponding expectation. For convenience, we extend $\mathbb{P}_{\mathrm{RA}}$ to all of $\mathbb{R}^n$ by setting
\[
\mathbb{P}_{\mathrm{RA}}(\bm{x})=0
\quad\text{for }\bm{x}\in\mathbb{R}^n\setminus\mathcal{X},
\quad
\mathbb{P}_{\mathrm{RA}}(\bm{x})=1
\quad\text{for }\bm{x}\in\mathcal{X}_r.
\]

The infinite-horizon reach-avoid verification problem is formalized as follows.

\begin{problem}
\label{pro:ra}
Given a prescribed threshold $\epsilon\in(0,1)$, the infinite-horizon reach-avoid verification problem is to establish
\begin{equation}
\label{eq:uniform}
\mathbb{P}_{\mathrm{RA}}(\bm{x})\geq\epsilon,
\qquad
\forall\bm{x}\in\mathcal{X}_0.
\end{equation}
\end{problem}

We study a necessary and sufficient characterization of Problem \ref{pro:ra} in terms of continuous barrier functions. The following assumptions specify the system and set-theoretic conditions used throughout the paper.

\begin{assumption}
\label{ass}
\begin{enumerate}
\item For every $\bm{\theta}\in\Theta$, $\bm{f}(\cdot,\bm{\theta})$ is continuous on $\mathbb{R}^n$, and for every $\bm{x}\in\mathbb{R}^n$, $\bm{f}(\bm{x},\cdot)$ is measurable on $\Theta$.

\item The safe set $\mathcal{X}\subset\mathbb{R}^n$ and the target set $\mathcal{X}_r\subset\mathcal{X}$ are bounded and open.

\item The initial set $\mathcal{X}_0\subseteq\mathcal{X}\setminus\mathcal{X}_r$ is compact.

\item There exists a compact set $\widehat{\mathcal{X}}\subseteq\mathbb{R}^n$ such that
\begin{equation}
\label{xhat}
\overline{\mathcal{X}}
\cup
\bm{f}(\overline{\mathcal{X}},\Theta)
\subseteq
\widehat{\mathcal{X}},
\end{equation}
where $\bm{f}(\overline{\mathcal{X}},\Theta):=
\left\{
\bm{f}(\bm{x},\bm{\theta})\mid
\bm{x}\in\overline{\mathcal{X}},\
\bm{\theta}\in\Theta
\right\}$.
\end{enumerate}
\end{assumption}

A sufficient condition for the assumption 4) in Assumption \ref{ass} to hold is that $\bm{f}$ be jointly continuous in $(\bm{x},\bm{\theta})$ and $\Theta$ be compact: then $\overline{\mathcal{X}}\times\Theta$ is compact, and hence $\bm{f}(\overline{\mathcal{X}},\Theta)$ is compact. The compactness of $\widehat{\mathcal{X}}$ provides a convenient finite-domain setting for imposing barrier conditions and performing polynomial approximation. On the other hand, consider a trajectory starting from
$\mathcal{X}_0\subseteq\mathcal{X}$ and let $k$ be its first exit time from the safe set before reaching the target set $\mathcal{X}_r$. Then
$\bm{x}_{k-1}\in\overline{\mathcal{X}}$, and hence $\bm{x}_k=\bm{f}(\bm{x}_{k-1},\bm{\theta}_{k-1})\in
\bm{f}(\overline{\mathcal{X}},\Theta)
\subseteq \widehat{\mathcal{X}}$. Thus, any failure to satisfy the reach-avoid specification necessarily occurs through a one-step transition from $\mathcal{X}$ into $\widehat{\mathcal{X}}\setminus\mathcal{X}$ before reaching $\mathcal{X}_r$. Since such a first exit already constitutes a violation of the reach-avoid specification, the subsequent evolution of the trajectory is irrelevant. Consequently, it suffices to analyze the system within $\mathcal{X}$ and its one-step transitions into $\widehat{\mathcal{X}}\setminus\mathcal{X}$, rather than over the entire state space $\mathbb{R}^n$. For more details, please refer to \cite{xue2021reach,yu2023safe,xue2024finite,xue2026converse}.

Furthermore, we impose the following uniform absolute continuity condition on transition kernels.

For $\bm{x}\in\mathcal{X}$, define the one-step transition kernel
\[
\mathbb{P}(\bm{x},A)
:=
\mathbb{P}_{\bm{\theta}}
\bigl(
\bm{f}(\bm{x},\bm{\theta})\in A
\bigr),
\qquad
A\in\mathcal{B}(\widehat{\mathcal{X}}),
\]
where $\mathcal{B}(\widehat{\mathcal{X}})$ denotes the Borel $\sigma$-algebra on $\widehat{\mathcal{X}}$, i.e., $\mathcal{B}(\widehat{\mathcal{X}}):=\left\{
A\cap\widehat{\mathcal{X}} \mid 
A\in\mathcal{B}(\mathbb{R}^n)
\right\}$.

\begin{assumption}
\label{ass2}
The family of transition measures
\[
\left\{
\mathbb{P}(\bm{x},\cdot) \mid 
\bm{x}\in\mathcal{X}\setminus\mathcal{X}_r
\right\}
\]
is uniformly absolutely continuous with respect to the $n$-dimensional Lebesgue measure $\mu$ on $\widehat{\mathcal{X}}$. Specifically, for every $\rho>0$ there exists $\eta>0$ such that
\begin{equation}
\label{uac}
\mu(A)<\eta
\quad\Longrightarrow\quad
\sup_{\bm{x}\in\mathcal{X}\setminus\mathcal{X}_r}
\mathbb{P}(\bm{x},A)<\rho
\end{equation}
for every Borel set $A\subseteq\widehat{\mathcal{X}}$. Equivalently,
\[
\mu(A_j)\to 0
\quad\Longrightarrow\quad
\sup_{\bm{x}\in\mathcal{X}\setminus\mathcal{X}_r}
\mathbb{P}(\bm{x},A_j)\to0
\]
for every sequence $\{A_j\}_{j\ge1}$ of Borel subsets of $\widehat{\mathcal{X}}$.
\end{assumption}

Assumption \ref{ass2} requires that sets of sufficiently small Lebesgue measure have uniformly small transition probability. This is what allows us to control the effect of the small-measure exceptional sets arising when a bounded l.s.c. function is approximated by continuous ones.

A convenient sufficient condition for Assumption \ref{ass2} is that the transition kernel admit a uniformly bounded density with respect to $n$-dimensional Lebesgue measure. Suppose that, for every $\bm{x}\in\mathcal{X}\setminus\mathcal{X}_r$,
\[
\mathbb{P}(\bm{x},d\bm{y})
=
p(\bm{y}\mid\bm{x})\,d\bm{y},
\]
where
\[
\sup_{\bm{x}\in\mathcal{X}\setminus\mathcal{X}_r}
\|p(\cdot\mid\bm{x})\|_{L^\infty(\widehat{\mathcal{X}})}
\leq Q<\infty.
\]
Then, for every Borel set $A\subseteq\widehat{\mathcal{X}}$,
\[
\sup_{\bm{x}\in\mathcal{X}\setminus\mathcal{X}_r}
\mathbb{P}(\bm{x},A)
=
\sup_{\bm{x}\in\mathcal{X}\setminus\mathcal{X}_r}
\int_A p(\bm{y}\mid\bm{x})\,d\bm{y}
\leq Q\,\mu(A),
\]
so $\mu(A)\to0$ implies
$\sup_{\bm{x}\in\mathcal{X}\setminus\mathcal{X}_r}\mathbb{P}(\bm{x},A)\to0$, and Assumption \ref{ass2} holds. This condition is satisfied, for example, by additive-noise systems with uniformly bounded disturbance densities, and by state-dependent affine noise models
\[
\bm{f}(\bm{x},\bm{\theta})
=
\bm{g}(\bm{x})+B(\bm{x})\bm{\theta},
\]
provided the disturbance admits a bounded density and the noise matrices are uniformly nonsingular with uniformly bounded inverses over the relevant state region.

\subsection{Lower Semicontinuous Barrier Functions}
\label{sec:lsbc}

We now recall the necessary and sufficient barrier-like condition established in \cite{xue2026converse}. In particular, it guarantees the existence of a bounded l.s.c. barrier function whenever the reach-avoid specification holds for every state in the initial set with a strict margin.

\begin{proposition}[\cite{xue2026converse}]
\label{thm:1}
Under Assumption \ref{ass}, if there exist $\gamma\in(0,1)$ and a l.s.c. barrier function $v:\widehat{\mathcal{X}}\to\mathbb{R}$, bounded over $\mathcal{X}$, satisfying
\begin{equation}
\label{constraint_gamma0}
\begin{cases}
v(\bm{x})\geq\epsilon,
& \forall\bm{x}\in\mathcal{X}_0,\\[1mm]
v(\bm{x})
\leq
\gamma\,\mathbb{E}_{\bm{\theta}}
\left[
v(\bm{f}(\bm{x},\bm{\theta}))
\right],
& \forall\bm{x}\in\mathcal{X}\setminus\mathcal{X}_r,\\[1mm]
v(\bm{x})\leq1,
& \forall\bm{x}\in\mathcal{X}_r,\\[1mm]
v(\bm{x})\leq0,
& \forall\bm{x}\in\widehat{\mathcal{X}}\setminus\mathcal{X},
\end{cases}
\end{equation}
then $\mathbb{P}_{\mathrm{RA}}(\bm{x})\geq\epsilon$ for all $\bm{x}\in\mathcal{X}_0$. Furthermore, if the reach-avoid specification holds with a strict margin for every state in the initial set $\mathcal{X}_0$, i.e., $
\mathbb{P}_{\mathrm{RA}}(\bm{x})>\epsilon, \forall\bm{x}\in\mathcal{X}_0$, then there exist $\gamma\in(0,1)$ and a bounded l.s.c. barrier function $v:\mathbb{R}^n\to\mathbb{R}$ satisfying \eqref{constraint_gamma0}.
\end{proposition}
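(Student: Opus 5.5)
We treat the two directions separately. For sufficiency, fix $\bm{x}\in\mathcal{X}_0$ and consider the stopped process $v(\bm{\phi}_{\bm{x}}^\pi(k\wedge\tau'))$, where $\tau'$ denotes the first time $k$ at which $\bm{\phi}_{\bm{x}}^\pi(k)\in\mathcal{X}_r\cup(\widehat{\mathcal{X}}\setminus\mathcal{X})$. By Assumption \ref{ass}(4), $\bm{\phi}_{\bm{x}}^\pi(k\wedge\tau')\in\widehat{\mathcal{X}}$ along the whole path.

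Iterating the second inequality of \eqref{constraint_gamma0} with the Markov property gives
\[
v(\bm{x})\le \mathbb{E}_\pi\bigl[\gamma^{k\wedge\tau'}v(\bm{\phi}_{\bm{x}}^\pi(k\wedge\tau'))\bigr]
\]
for every $k$. To justify the iteration, note that $v$ is l.s.c. on the compact set $\widehat{\mathcal{X}}$, hence bounded below there. It is also bounded above on $\mathcal{X}$, and $v\le 0$ off $\mathcal{X}$. The expectations are therefore well defined.

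We then split the expectation according to three events.
\begin{enumerate}
\item On $\{\tau'\le k,\ \text{state in }\mathcal{X}_r\}$, the integrand is at most $1$.
\item On exits from $\mathcal{X}$, the integrand is at most $0$.
\item On $\{\tau'>k\}$, the integrand is bounded by $\gamma^k\sup_{\mathcal{X}}|v|$, which tends to $0$.
\end{enumerate}
Letting $k\to\infty$ yields $\epsilon\le v(\bm{x})\le \mathbb{P}_\pi(\text{reach }\mathcal{X}_r\text{ before leaving }\mathcal{X})=\mathbb{P}_{\mathrm{RA}}(\bm{x})$. The factor $\gamma<1$ is what kills the contribution of trajectories that stay in $\mathcal{X}\setminus\mathcal{X}_r$ forever, which is why no supermartingale-convergence argument is needed.

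For necessity, we would construct $v$ as a discounted reach-avoid value. Let
\[
V_\gamma(\bm{x}):=\mathbb{E}_\pi\bigl[\gamma^{\tau(\bm{x},\pi)}\mathbf 1_{\tau<\infty}\bigr]
\]
on $\mathcal{X}$, with $V_\gamma=1$ on $\mathcal{X}_r$ and $V_\gamma=0$ outside $\mathcal{X}$. Then $V_\gamma$ satisfies the Bellman equation $V_\gamma(\bm{x})=\gamma\,\mathbb{E}_{\bm{\theta}}[V_\gamma(\bm{f}(\bm{x},\bm{\theta}))]$ on $\mathcal{X}\setminus\mathcal{X}_r$, so the second inequality holds with equality. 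The third and fourth inequalities and boundedness are immediate.

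For lower semicontinuity, we write $V_\gamma=\sup_N V_\gamma^N$, where $V_\gamma^N$ is the truncation $\mathbb{E}[\gamma^\tau\mathbf 1_{\tau\le N}]$. Each $V_\gamma^N$ is an integral over $\pi$ of the indicator of an event defined by finitely many conditions of the form $\bm{\phi}(j)\in\mathcal{X}$ and $\bm{\phi}(k)\in\mathcal{X}_r$. Because $\mathcal{X}$ and $\mathcal{X}_r$ are open and $\bm{f}(\cdot,\bm{\theta})$ is continuous, this indicator is l.s.c. in $\bm{x}$ for each fixed $\pi$. Fatou's lemma then gives that each $V_\gamma^N$ is l.s.c., and a supremum of l.s.c. functions is l.s.c., so $V_\gamma$ is l.s.c.

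The main obstacle is the first condition, namely choosing $\gamma$ so that $V_\gamma\ge\epsilon$ uniformly on $\mathcal{X}_0$. As $\gamma\uparrow1$, $V_\gamma$ increases pointwise to $\mathbb{P}_{\mathrm{RA}}$ by monotone convergence, and by the same argument $\mathbb{P}_{\mathrm{RA}}$ is l.s.c. as well. We plan to use a Dini-type compactness argument. For each $\bm{x}_0\in\mathcal{X}_0$ we have $\mathbb{P}_{\mathrm{RA}}(\bm{x}_0)>\epsilon$, so there exist $N$ and $\gamma<1$ with $V^N_\gamma(\bm{x}_0)>\epsilon$. Since $V^N_\gamma$ is l.s.c., the set $\{V^N_\gamma>\epsilon\}$ is open and contains $\bm{x}_0$. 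Both $V^N_\gamma$ and $V_\gamma$ are monotone in $\gamma$ and in $N$, so finitely many such open sets cover the compact set $\mathcal{X}_0$. Taking the largest $\gamma$ and the largest $N$ among them yields $V_\gamma\ge V_\gamma^N>\epsilon$ on all of $\mathcal{X}_0$. This step is where compactness of $\mathcal{X}_0$, openness of $\mathcal{X}$ and $\mathcal{X}_r$, and the strict margin are all essential.
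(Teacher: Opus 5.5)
Your proposal is correct in both directions. The paper does not prove this proposition itself; it imports it from \cite{xue2026converse}, so there is no in-paper argument to compare against. Your route fits the paper's description of the cited results: a relaxation of the Bellman equation, with necessity extended from a single initial state to a compact initial set.

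Sufficiency follows from comparing $v$ with the stopped discounted process, which gives $v\le\mathbb{P}_{\mathrm{RA}}$. For necessity, you take the exact Bellman solution $V_\gamma$. It is l.s.c.\ by openness of $\mathcal{X}$ and $\mathcal{X}_r$, continuity of $\bm{f}(\cdot,\bm{\theta})$, and Fatou's lemma. A Dini-type covering of the compact set $\mathcal{X}_0$ then fixes a single $\gamma$. This even yields the strict inequality $v>\epsilon$ on $\mathcal{X}_0$ that Lemma~\ref{strict} later relies on.

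One wording fix is needed. The integrand $\gamma^{\tau}\mathbf{1}_{\{\tau\le N\}}$ is not an indicator. Write it as $\max_{k\le N}\gamma^k\mathbf{1}_{\{\tau(\bm{x},\pi)\le k\}}$, which is a finite maximum of l.s.c.\ functions of $\bm{x}$. Alternatively, observe directly that $\tau(\cdot,\pi)$ is upper semicontinuous, so $\bm{x}\mapsto\gamma^{\tau(\bm{x},\pi)}$ is l.s.c.\ and the truncation is unnecessary.
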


Proposition \ref{thm:1} establishes only the existence of an l.s.c. barrier function. As discussed in the introduction, such a function may exhibit irregularities that hinder numerical approximation and computational synthesis. This motivates the main question addressed in the next section: under Assumptions \ref{ass} and \ref{ass2}, can we establish a barrier-like condition involving continuous barrier functions that is both sufficient and necessary for infinite-horizon reach-avoid verification?

%% file: conditions.tex
\section{Continuous Barrier-like Conditions}
\label{sec:cbc}
In this section, we establish a necessary and sufficient barrier-like condition in terms of continuous barrier functions for infinite-horizon reach-avoid verification. Specifically, we show that the barrier-like condition in Theorem 4 of \cite{xue2024finite}, originally developed for finite-horizon reach-avoid verification, is sufficient for infinite-horizon reach-avoid verification. Under Assumptions \ref{ass}--\ref{ass2}, we then show that there also exists a continuous barrier function satisfying this condition with strict margins, thereby yielding necessity. Finally, these strict margins are used to establish the existence of polynomial barrier functions satisfying the condition with strict margins.

Before showing the necessary and sufficient barrier-like condition in terms of continuous barrier functions, we first strengthen this first non-strict condition in the barrier-like condition \eqref{constraint_gamma0} to a strict one.

\begin{lemma}
\label{strict}
Under Assumption \ref{ass}, if there exist a constant $\gamma\in(0,1)$ and a bounded l.s.c. barrier function
$v:\widehat{\mathcal{X}}\rightarrow\mathbb R$ satisfying
\begin{equation}
\label{constraint_gamma01}
\begin{cases}
v(\bm{x})>\epsilon,
& \forall\bm{x}\in\mathcal{X}_0,\\
v(\bm{x})
\leq
\gamma\mathbb{E}_{\bm{\theta}}
\left[
v(\bm f(\bm{x},\bm{\theta}))
\right],
& \forall\bm{x}\in\mathcal{X}\setminus\mathcal{X}_r,\\
v(\bm{x})\leq1,
& \forall\bm{x}\in\mathcal{X}_r,\\
v(\bm{x})\leq0,
& \forall\bm{x}\in\widehat{\mathcal{X}}\setminus\mathcal{X},
\end{cases}
\end{equation}
then $\mathbb P_{\mathrm{RA}}(\bm{x})\geq\epsilon, 
\forall\bm{x}\in\mathcal{X}_0$.
Furthermore, if the reach-avoid specification holds with a strict margin, i.e., $
\mathbb{P}_{\mathrm{RA}}(\bm{x})>\epsilon, \forall\bm{x}\in\mathcal{X}_0$,  there exist a constant
$\gamma\in(0,1)$ and a bounded l.s.c. barrier function
$v:\widehat{\mathcal{X}}\rightarrow\mathbb R$ satisfying \eqref{constraint_gamma01}.
\end{lemma}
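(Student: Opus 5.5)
The sufficiency direction follows directly from Proposition \ref{thm:1}. Any bounded l.s.c. $v$ satisfying \eqref{constraint_gamma01} also satisfies \eqref{constraint_gamma0}, because $v(\bm{x})>\epsilon$ implies $v(\bm{x})\geq\epsilon$ on $\mathcal{X}_0$; the other three constraints are identical. Boundedness over $\widehat{\mathcal{X}}$ gives boundedness over $\mathcal{X}$. The first part of Proposition \ref{thm:1} then yields $\mathbb{P}_{\mathrm{RA}}(\bm{x})\geq\epsilon$ on $\mathcal{X}_0$.

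For necessity, the plan is to absorb the strict margin into a slightly larger threshold and then apply the converse part of Proposition \ref{thm:1} at that threshold.

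First, I will show that $\min_{\bm{x}\in\mathcal{X}_0}\mathbb{P}_{\mathrm{RA}}(\bm{x})$ is attained and hence exceeds $\epsilon$. For this I will argue that $\mathbb{P}_{\mathrm{RA}}$ is l.s.c. on $\mathcal{X}$, using that $\mathcal{X}$ and $\mathcal{X}_r$ are open and $\bm{f}(\cdot,\bm{\theta})$ is continuous.
\begin{itemize}
\item Fix $k$ and a realization $\pi$. The event that the trajectory stays in $\mathcal{X}$ up to step $k$ and lies in $\mathcal{X}_r$ at step $k$ is an open condition in $\bm{x}$.
\item Hence the indicator of $\{\tau(\bm{x},\pi)\leq K\}$ is l.s.c. in $\bm{x}$ for each fixed $\pi$.
\item By Fatou's lemma, $\bm{x}\mapsto\mathbb{P}_\pi(\tau(\bm{x},\pi)\leq K)$ is l.s.c.
\item $\mathbb{P}_{\mathrm{RA}}$ is the increasing limit of these functions as $K\to\infty$, and a supremum of l.s.c. functions is l.s.c.
\end{itemize}
Since $\mathcal{X}_0$ is compact, the minimum $p^*$ is attained and $p^*>\epsilon$. (If I prefer not to rederive lower semicontinuity, I expect it is established in \cite{xue2026converse} and can be cited there.)

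Second, I will choose $\epsilon'\in(\epsilon,p^*)$, for instance $\epsilon'=(\epsilon+p^*)/2$. Then $\mathbb{P}_{\mathrm{RA}}(\bm{x})>\epsilon'$ for all $\bm{x}\in\mathcal{X}_0$, and $\epsilon'\in(0,1)$ because $p^*\leq1$. The converse part of Proposition \ref{thm:1}, applied with threshold $\epsilon'$, gives $\gamma\in(0,1)$ and a bounded l.s.c. $v$ satisfying \eqref{constraint_gamma0} with $\epsilon'$ in place of $\epsilon$. Restricting $v$ to $\widehat{\mathcal{X}}$ preserves both lower semicontinuity and boundedness. We then have $v(\bm{x})\geq\epsilon'>\epsilon$ on $\mathcal{X}_0$, and the remaining constraints of \eqref{constraint_gamma01} coincide with those of \eqref{constraint_gamma0}.

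The main obstacle is the first step: guaranteeing a uniform margin $p^*>\epsilon$ from pointwise strictness. This requires lower semicontinuity of $\mathbb{P}_{\mathrm{RA}}$ together with compactness of $\mathcal{X}_0$. Once that margin is in hand, the rest is a direct invocation of Proposition \ref{thm:1}.
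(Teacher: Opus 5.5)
Your proof is correct, but it takes a different route from the paper for necessity. Your sufficiency argument matches the paper's. For necessity, the paper gives no self-contained argument. It only says the result follows by re-running the proof of Theorem~2 in \cite{xue2026converse}, that is, by reopening the construction of the l.s.c.\ barrier there and checking that it already yields $v>\epsilon$ on $\mathcal{X}_0$.

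You instead use the converse half of Proposition~\ref{thm:1} as a black box at the shifted threshold $\epsilon'=(\epsilon+p^*)/2\in(0,1)$. The only extra ingredient you need is lower semicontinuity of $\mathbb{P}_{\mathrm{RA}}$, and your derivation of it is sound:
\begin{itemize}
\item For fixed $\pi$, the set $\{\bm{x}:\tau(\bm{x},\pi)\le K\}$ is a finite union of finite intersections of preimages of the open sets $\mathcal{X}$ and $\mathcal{X}_r$ under continuous maps, so it is open.
\item Fatou's lemma then makes $\bm{x}\mapsto\mathbb{P}_\pi(\tau(\bm{x},\pi)\le K)$ l.s.c.
\item The increasing limit in $K$ preserves lower semicontinuity.
\end{itemize}
Compactness of $\mathcal{X}_0$ then gives the uniform margin $p^*>\epsilon$. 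The measurability in $\pi$ that Fatou requires holds because $\bm f$ is Carath\'eodory under Assumption~\ref{ass}, hence jointly measurable.

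Your route buys modularity and checkability within the paper: nothing depends on the internal details of the cited proof. It also gives the explicit margin $\inf_{\mathcal{X}_0}v-\epsilon\ge(p^*-\epsilon)/2$, which is exactly the positive quantity $m_0$ used later in the proof of Theorem~\ref{thm:con}. The paper's deferral is shorter and needs no threshold shift. However, it leaves the reader to verify that the construction in \cite{xue2026converse} indeed delivers the strict inequality on $\mathcal{X}_0$.
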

\begin{pf}
Since $v(\bm{x})>\epsilon, \forall\bm{x}\in\mathcal{X}_0$ implies 
$v(\bm{x})\geq \epsilon, \forall\bm{x}\in\mathcal{X}_0$, the sufficiency conclusion can be obtained directly according to Theorem \ref{thm:1}. 

The necessity conclusion can be obtained by following the proof of Theorem 2 in \cite{xue2026converse}.
\end{pf}

Lemma \ref{strict} provides the strict margin needed for the subsequent continuity argument. Since $\mathcal{X}_0$ is compact and $v$ is l.s.c., the strict inequality implies $\inf_{\bm{x}\in\mathcal{X}_0}v(\bm{x})-\epsilon>0$. This positive margin allows us to perturb the l.s.c. barrier function from below without violating the initial-state condition. We next use this margin, together with the additional regularity of the transition kernel in Assumption \ref{ass2}, to construct a continuous barrier function.

The barrier-like condition for continuous barrier functions is given in Theorem 4 of \cite{xue2024finite}, where it was originally developed for finite-horizon reach-avoid verification. Compared with the barrier-like condition \eqref{constraint_gamma0} in Proposition \ref{thm:1}, this condition contains an additional additive constant $\beta$. When $\beta=0$, it reduces to the corresponding condition \eqref{constraint_gamma0} in Proposition \ref{thm:1}. We first show that, with the time horizon taken to infinity, this condition also provides a sufficient condition for infinite-horizon reach-avoid verification. Specifically, if a continuous barrier function satisfies the four inequalities in \eqref{caes:1}--\eqref{caes:4}, then the reach-avoid probability is at least $\epsilon$ for every initial state in $\mathcal{X}_0$. The main difficulty lies in establishing necessity. Under Assumption \ref{ass2} and the strict reach-avoid condition, Lemma \ref{strict} provides a bounded l.s.c. barrier function satisfying \eqref{constraint_gamma01}. We construct a continuous barrier function from this l.s.c. barrier by approximating it from below with an increasing sequence of continuous minorants. Egoroff's theorem then yields uniform convergence of this sequence outside a set of arbitrarily small Lebesgue measure. By the uniform absolute continuity condition in Assumption \ref{ass2}, the transition probability of this exceptional set is uniformly small. Consequently, for a sufficiently accurate continuous minorant, the error between the expected values of the l.s.c. barrier and its continuous minorant can be made uniformly small and absorbed into the strict margin provided by the additive constant $\beta$. A suitable constant shift of the continuous minorant then ensures the required strict inequalities on the initial, target, and exterior regions. Finally, continuity of the resulting one-step residual extends the inequality from $\mathcal{X}\setminus\mathcal{X}_r$ to its closure. This establishes the existence of a continuous barrier function satisfying all four inequalities in \eqref{caes:1}--\eqref{caes:4} strictly.

\begin{theorem}
\label{thm:con}
Under Assumption \ref{ass}, if there exist a continuous function
$\tilde v\in C(\widehat{\mathcal{X}})$,
$\gamma\in(0,1)$, and $\beta\in[0,\infty)$ satisfying
\begin{subnumcases}{}
\tilde{v}(\bm{x})
\geq
\displaystyle
\epsilon+\frac{\epsilon\beta-\beta}{1-\gamma},
& $\forall\bm{x}\in\mathcal{X}_0$,
\label{caes:1}\\
\beta+\tilde{v}(\bm{x})
\leq
\gamma\mathbb{E}_{\bm{\theta}}
\left[
\tilde{v}(\bm{f}(\bm{x},\bm{\theta}))
\right],
& $\forall\bm{x}\in
\overline{\mathcal{X}\setminus\mathcal{X}_r}$,
\label{caes:2}\\
\tilde{v}(\bm{x})\leq1,
& $\forall\bm{x}\in\overline{\mathcal{X}_r}$,
\label{caes:3}\\
(1-\gamma)\tilde{v}(\bm{x})\leq-\beta,
& $\forall\bm{x}\in
\overline{\widehat{\mathcal{X}}\setminus\mathcal{X}}$,
\label{caes:4}
\end{subnumcases}
then $\mathbb P_{\mathrm{RA}}(\bm{x})\geq\epsilon, \forall\bm{x}\in\mathcal{X}_0$. Moreover, under Assumption \ref{ass2}, if the reach-avoid specification holds with a strict margin, i.e., $
\mathbb{P}_{\mathrm{RA}}(\bm{x})>\epsilon, \forall\bm{x}\in\mathcal{X}_0$, there exist
a continuous function
$\tilde v\in C(\widehat{\mathcal{X}})$,
$\gamma\in(0,1)$, and $\beta\in[0,\infty)$ such that all four
inequalities in \eqref{caes:1}--\eqref{caes:4} hold strictly.
\end{theorem}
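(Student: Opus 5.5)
The plan has two parts. For sufficiency, I would reduce \eqref{caes:1}--\eqref{caes:4} to the condition of Proposition \ref{thm:1} by an affine change of variables. For necessity, I would regularize the l.s.c. barrier of Lemma \ref{strict} from below and absorb the regularization error into $\beta$.

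\emph{Sufficiency.} Given $\tilde v,\gamma,\beta$, set $w:=\tilde v+\frac{\beta}{1-\gamma}$ and $u:=w/\bigl(1+\frac{\beta}{1-\gamma}\bigr)$.

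1. A direct computation turns \eqref{caes:2} into $w(\bm x)\le\gamma\mathbb E_{\bm\theta}[w(\bm f(\bm x,\bm\theta))]$, because $\beta-\frac{\beta}{1-\gamma}+\frac{\gamma\beta}{1-\gamma}=0$.
2. Condition \eqref{caes:4} gives $w\le 0$ on $\widehat{\mathcal X}\setminus\mathcal X$.
3. Condition \eqref{caes:3} gives $w\le 1+\frac{\beta}{1-\gamma}$ on $\mathcal X_r$.
4. Condition \eqref{caes:1} gives $w\ge\epsilon\bigl(1+\frac{\beta}{1-\gamma}\bigr)$ on $\mathcal X_0$.

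After dividing by the positive constant, $u$ satisfies \eqref{constraint_gamma0}, restricting the closure conditions to the sets themselves. Moreover $u$ is continuous on the compact set $\widehat{\mathcal X}$, hence bounded and l.s.c. Proposition \ref{thm:1} then yields $\mathbb P_{\mathrm{RA}}\ge\epsilon$ on $\mathcal X_0$.

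\emph{Necessity: construction.} Take $\gamma$ and the bounded l.s.c. function $v$ from Lemma \ref{strict}, with $|v|\le M$ on $\widehat{\mathcal X}$.

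1. Since $\widehat{\mathcal X}$ is a compact metric space, Baire's theorem gives continuous $v_k\uparrow v$ pointwise with $-M\le v_k\le v$.
2. The sets $\{v_k>\epsilon\}$ are open, increasing, and cover the compact set $\mathcal X_0$. Hence there exist $k_0$ and $m>0$ with $v_k\ge\epsilon+m$ on $\mathcal X_0$ for all $k\ge k_0$.
3. Given $\delta>0$, choose $\eta$ from Assumption \ref{ass2} with $\rho=\delta/(4M)$. By Egoroff's theorem on $(\widehat{\mathcal X},\mu)$ there is a Borel set $E$ with $\mu(E)<\eta$ such that $v_k\to v$ uniformly on $\widehat{\mathcal X}\setminus E$.
4. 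Splitting the expectation over $\{\bm f(\bm x,\bm\theta)\in E\}$ and its complement gives, for $k$ large,
\[
\sup_{\bm x\in\mathcal X\setminus\mathcal X_r}\mathbb E_{\bm\theta}\bigl[v(\bm f(\bm x,\bm\theta))-v_k(\bm f(\bm x,\bm\theta))\bigr]\le\delta .
\]
5. Consequently $v_k\le v\le\gamma\mathbb E_{\bm\theta}[v(\bm f)]\le\gamma\mathbb E_{\bm\theta}[v_k(\bm f)]+\gamma\delta$ on $\mathcal X\setminus\mathcal X_r$.

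\emph{Necessity: choice of constants.} Set $\tilde v:=v_k-c$ with $c>0$, and $\beta:=c(1-\gamma)-\gamma\delta-\kappa$.

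1. Fix $c$ with $\epsilon c<m/2$. Then choose $\delta,\kappa>0$ so small that $\beta>0$ and $(1-\epsilon)(\gamma\delta+\kappa)/(1-\gamma)<m/2$.
2. For this choice, \eqref{caes:2} holds on $\mathcal X\setminus\mathcal X_r$ with slack at least $\kappa$.
3. Condition \eqref{caes:4} holds strictly, since $(1-\gamma)\tilde v\le-c(1-\gamma)<-\beta$; here we use $v_k\le v\le0$.
4. Condition \eqref{caes:3} holds strictly, since $\tilde v\le1-c<1$.
5. Condition \eqref{caes:1} reduces to $\epsilon+m-c>\epsilon-(1-\epsilon)c+(1-\epsilon)(\gamma\delta+\kappa)/(1-\gamma)$. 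This is implied by the choice in step 1.

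\emph{Extension to closures.} The inequalities on $\mathcal X_r$ and $\widehat{\mathcal X}\setminus\mathcal X$ pass to their closures by continuity of $v_k$. For \eqref{caes:2}, the residual $r(\bm x)=\gamma\mathbb E_{\bm\theta}[\tilde v(\bm f(\bm x,\bm\theta))]-\tilde v(\bm x)-\beta$ is continuous on $\overline{\mathcal X}$. This follows from dominated convergence, using continuity of $\bm f(\cdot,\bm\theta)$, boundedness and continuity of $\tilde v$ on $\widehat{\mathcal X}$, and $\bm f(\overline{\mathcal X},\Theta)\subseteq\widehat{\mathcal X}$. So $r\ge\kappa>0$ extends to the closure.

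\emph{Main obstacle.} The step I expect to be hardest is the uniform-in-$\bm x$ control of $\mathbb E_{\bm\theta}[v-v_k]$. Pointwise convergence alone does not give it. Egoroff's theorem combined with the uniform absolute continuity in Assumption \ref{ass2} is exactly what handles the exceptional set $E$. A second point to watch is ordering the constants ($k_0,m$, then $c$, then $\delta,\kappa$, then $k$). This ordering works because enlarging $k$ only improves the margin $m$.
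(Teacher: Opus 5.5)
Your proof is correct. The necessity half is essentially the paper's argument: Lemma~\ref{strict}, increasing continuous minorants, Egoroff's theorem plus Assumption~\ref{ass2} to bound $\mathbb{E}_{\bm\theta}[v-v_k]$ uniformly over $\mathcal{X}\setminus\mathcal{X}_r$, a constant shift, and continuity of the one-step residual to reach the closures. You only organize the constants differently. You set $\beta:=c(1-\gamma)-\gamma\delta-\kappa$ and use the $\beta$-dependent threshold in \eqref{caes:1}. The paper instead fixes $\beta<(1-\gamma)m_0$ and $\beta/(1-\gamma)<c<m_0$, and simply shows $\tilde v>\epsilon$ on $\mathcal{X}_0$.

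The sufficiency half is a genuinely different route. The paper applies Theorem~4 of \cite{xue2024finite} with $\gamma_1=1/\gamma$, $\beta_1=\beta/\gamma$ and lets $N\to\infty$, obtaining $\mathbb{P}_{\mathrm{RA}}(\bm{x})\geq\frac{(1-\gamma)\tilde v(\bm{x})+\beta}{1-\gamma+\beta}$. You observe that this same quantity, your $u$, satisfies \eqref{constraint_gamma0} outright, so Proposition~\ref{thm:1} finishes the argument.

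Your reduction buys two things. It is more self-contained, relying only on a result already stated in the paper and avoiding the detour through finite-horizon bounds and a limit in $N$. It also shows that, in the infinite-horizon setting, $\beta$ enters only through an affine change of variables. Indeed, the same map sends a strict solution of \eqref{caes:1}--\eqref{caes:4} to a continuous strict solution of \eqref{constraint_gamma0}. The paper's route, in return, ties the condition to an explicit bound valid at every finite horizon, where $\beta$ carries real content. It also records the pointwise estimate $\mathbb{P}_{\mathrm{RA}}\geq u$ rather than only the threshold conclusion.
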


\begin{pf}
\noindent\textbf{1. Sufficiency Direction.}

The sufficiency follows from Theorem 4 in \cite{xue2024finite}.
Applying that result with
\[
\gamma_1=\frac{1}{\gamma},
\qquad
\beta_1=\frac{\beta}{\gamma},
\]
gives, for every $\bm{x}\in\mathcal{X}_0$,
\[
\begin{aligned}
\mathbb P_{\mathrm{RA}}(\bm{x})
&\geq
\lim_{N\rightarrow\infty}
\frac{
(\gamma_1^{N+1}\tilde v(\bm{x})-L)(\gamma_1-1)
+\beta_1(\gamma_1^{N+1}-1)
}{
(\gamma_1+\beta_1-1)(\gamma_1^{N+1}-1)
}\\
&=
\frac{(1-\gamma)\tilde v(\bm{x})+\beta}
{1-\gamma+\beta},
\end{aligned}
\]
where $\max_{\bm{x}\in \widehat{\mathcal{X}}}|\tilde{v}(\bm{x})| \leq L$. By \eqref{caes:1}, we have  $\tilde v(\bm{x})
\geq
\epsilon+\frac{\epsilon\beta-\beta}{1-\gamma}$,
and hence
\[
\begin{aligned}
(1-\gamma)\tilde v(\bm{x})+\beta
&\geq
\epsilon(1-\gamma)
+\epsilon\beta-\beta+\beta\\
&=
\epsilon(1-\gamma+\beta).
\end{aligned}
\]
Since $1-\gamma+\beta>0$,
\[
\mathbb P_{\mathrm{RA}}(\bm{x})
\geq
\frac{(1-\gamma)\tilde v(\bm{x})+\beta}
{1-\gamma+\beta}
\geq\epsilon.
\]
Therefore, $\mathbb P_{\mathrm{RA}}(\bm{x})\geq\epsilon, \forall\bm{x}\in\mathcal{X}_0$.

\medskip
\noindent\textbf{2. Necessity Direction.}

By Lemma \ref{strict}, there exist a constant
$\gamma\in(0,1)$ and a bounded l.s.c. barrier function
$v:\widehat{\mathcal{X}}\rightarrow\mathbb R$ satisfying
\eqref{constraint_gamma01}.

Define $m_0:=\inf_{\bm{x}\in\mathcal{X}_0}v(\bm{x})-\epsilon$. Since $v$ is l.s.c. and $\mathcal{X}_0$ is compact,
the infimum is attained. Moreover,
$v(\bm{x})>\epsilon$ on $\mathcal{X}_0$, and therefore $m_0>0$.

Choose $\beta\in(0,(1-\gamma)m_0)$,
and then choose $c$ such that
\[
\frac{\beta}{1-\gamma}<c<m_0.
\]
Define $r:=(1-\gamma)c-\beta>0$.

\medskip
\noindent\textbf{Step 1: Continuous Minorants.}

Since $v$ is bounded and l.s.c. on the compact set
$\widehat{\mathcal{X}}$, there exists a sequence
$\{w_j\}_{j\geq1}\subset C(\widehat{\mathcal{X}})$ such that
\[
w_j(\bm{x})\leq w_{j+1}(\bm{x})\leq v(\bm{x}),
\qquad
\forall\bm{x}\in\widehat{\mathcal{X}},
\]
and
\begin{equation}
\label{w_j}
w_j(\bm{x})\uparrow v(\bm{x}),
\qquad
\forall\bm{x}\in\widehat{\mathcal{X}}.
\end{equation}
For example, one may take
\[
w_j(\bm{x})
:=
\inf_{\bm{y}\in\widehat{\mathcal{X}}}
\left\{
v(\bm{y})+j\|\bm{x}-\bm{y}\|
\right\}.
\]

Let $M:=\sup_{\bm{y}\in\widehat{\mathcal{X}}}v(\bm{y}),
\quad
m:=\inf_{\bm{y}\in\widehat{\mathcal{X}}}v(\bm{y})$.
Then
\[
m\leq w_j(\bm{x})\leq v(\bm{x})\leq M,
\]
and hence
\[
0\leq v(\bm{x})-w_j(\bm{x})\leq M-m,
\qquad
\forall\bm{x}\in\widehat{\mathcal{X}}.
\]

\medskip
\noindent\textbf{Step 2: Choice of Approximation Parameters.}

Choose $\delta>0$ and $\rho>0$ sufficiently small such that
\[
\gamma\delta+\gamma(M-m)\rho<\frac{r}{2}.
\]
By Assumption \ref{ass2}, there exists $\eta>0$ such that $\mu(A)<\eta$ implies $\sup_{\bm{x}\in\mathcal{X}\setminus\mathcal{X}_r}
\mathbb P(\bm{x},A)<\rho$ for every Borel set
$A\subseteq\widehat{\mathcal{X}}$.

\medskip
\noindent\textbf{Step 3: Egoroff Approximation.}

Since $\widehat{\mathcal{X}}$ is compact, $\mu(\widehat{\mathcal{X}})<\infty$ holds.
Together with the pointwise convergence \eqref{w_j}, Egoroff's
theorem yields a measurable set
$E_0\subseteq\widehat{\mathcal{X}}$ such that $\mu(E_0)<\eta$
and $w_j\rightarrow v$ uniformly on
$\widehat{\mathcal{X}}\setminus E_0$.

By regularity of Lebesgue measure, choose a Borel set
$E\supseteq E_0$ satisfying
\[
\mu(E)<\eta.
\]
Then the convergence remains uniform on
$\widehat{\mathcal{X}}\setminus E$.

Hence, there exists $J_1$ such that, for all $j\geq J_1$,
\[
0\leq v(\bm{y})-w_j(\bm{y})\leq\delta,
\qquad
\forall\bm{y}\in\widehat{\mathcal{X}}\setminus E.
\]

\medskip
\noindent\textbf{Step 4: Uniform Initial-State Condition.}

Since $c<m_0$,
\[
v(\bm{x})
\geq
\epsilon+m_0
>
\epsilon+c,
\qquad
\forall\bm{x}\in\mathcal{X}_0.
\]
For each $j$, define
\[
U_j:=
\left\{
\bm{x}\in\mathcal{X}_0:
w_j(\bm{x})>\epsilon+c
\right\}.
\]
Each $U_j$ is relatively open in $\mathcal{X}_0$, and $U_j\subseteq U_{j+1}$. Moreover, by \eqref{w_j}, $\mathcal{X}_0=\bigcup_{j=1}^{\infty}U_j$.

Since $\mathcal{X}_0$ is compact, there exists $J_0$ such that $\mathcal{X}_0\subseteq U_{J_0}$. Thus, $w_{J_0}(\bm{x})>\epsilon+c, \forall\bm{x}\in\mathcal{X}_0$.

Choose $j\geq\max\{J_0,J_1\}$,
and define
\[
\tilde v:=w_j-c.
\]
Then $\tilde v\in C(\widehat{\mathcal{X}})$
and $\tilde v(\bm{x})>\epsilon, \forall\bm{x}\in\mathcal{X}_0$.

Moreover,
\begin{equation}
\label{uni:delta}
0\leq v(\bm{y})-w_j(\bm{y})\leq\delta,
\qquad
\forall\bm{y}\in\widehat{\mathcal{X}}\setminus E.
\end{equation}

\medskip
\noindent\textbf{Step 5: Uniform Transition Error.}

For any
$\bm{x}\in\mathcal{X}\setminus\mathcal{X}_r$,
\[
\begin{aligned}
0
&\leq
\mathbb{E}_{\bm{\theta}}
\left[
v(\bm f(\bm{x},\bm{\theta}))
-w_j(\bm f(\bm{x},\bm{\theta}))
\right]\\
&=
\int_{\widehat{\mathcal{X}}}
(v(\bm{y})-w_j(\bm{y}))
\mathbb P(\bm{x},d\bm{y})\\
&\leq
\delta\mathbb P(\bm{x},\widehat{\mathcal{X}}\setminus E)
+
(M-m)\mathbb P(\bm{x},E)\\
&\leq
\delta+(M-m)\mathbb P(\bm{x},E)\\
&<
\delta+(M-m)\rho.
\end{aligned}
\]
Therefore,
\begin{equation}
\label{cc}
\gamma
\sup_{\bm{x}\in\mathcal{X}\setminus\mathcal{X}_r}
\left|
\mathbb{E}_{\bm{\theta}}
[v(\bm f(\bm{x},\bm{\theta}))]
-\mathbb{E}_{\bm{\theta}}
[w_j(\bm f(\bm{x},\bm{\theta}))]
\right|
<\frac r2.
\end{equation}

\medskip
\noindent\textbf{Step 6: Bellman Inequality.}

For
$\bm{x}\in\mathcal{X}\setminus\mathcal{X}_r$,
\[
w_j(\bm{x})
\leq
v(\bm{x})
\leq
\gamma
\mathbb{E}_{\bm{\theta}}
[v(\bm f(\bm{x},\bm{\theta}))].
\]
Consequently,
\[
\begin{aligned}
&\gamma
\mathbb{E}_{\bm{\theta}}
[\tilde v(\bm f(\bm{x},\bm{\theta}))]
-\tilde v(\bm{x})\\
=&
\gamma
\mathbb{E}_{\bm{\theta}}
[w_j(\bm f(\bm{x},\bm{\theta}))]
-w_j(\bm{x})
+(1-\gamma)c\\
\geq&
-\gamma
\left|
\mathbb{E}_{\bm{\theta}}
[v(\bm f(\bm{x},\bm{\theta}))]
-
\mathbb{E}_{\bm{\theta}}
[w_j(\bm f(\bm{x},\bm{\theta}))]
\right|
+(1-\gamma)c\\
> &
-\frac r2+(1-\gamma)c=\beta+\frac r2.
\end{aligned}
\]
Hence
\[
\gamma
\mathbb{E}_{\bm{\theta}}
[\tilde v(\bm f(\bm{x},\bm{\theta}))]
-\tilde v(\bm{x})
>
\beta+\frac r2
\]
for all
$\bm{x}\in\mathcal{X}\setminus\mathcal{X}_r$.

Since $\tilde v$ is continuous on the compact set
$\widehat{\mathcal{X}}$, it is bounded. Moreover,
$\bm f(\cdot,\bm{\theta})$ is continuous for every $\bm{\theta}$.
Therefore, by dominated convergence,
\[
\bm{x}\mapsto
\mathbb{E}_{\bm{\theta}}
[\tilde v(\bm f(\bm{x},\bm{\theta}))]
\]
is continuous over $\overline{\mathcal{X}}$. Hence $g(\bm{x}):= \gamma
\mathbb{E}_{\bm{\theta}}
[\tilde v(\bm f(\bm{x},\bm{\theta}))]
-\tilde v(\bm{x})$
is continuous over $\overline{\mathcal{X}}$.

Since $g(\bm{x})>\beta+\frac r2$ on $\mathcal{X}\setminus\mathcal{X}_r$, continuity implies
\[
g(\bm{x})\geq\beta+\frac r2>\beta,
\qquad
\forall\bm{x}\in
\overline{\mathcal{X}\setminus\mathcal{X}_r}.
\]
Thus, $\beta+\tilde v(\bm{x})< \gamma \mathbb{E}_{\bm{\theta}}
[\tilde v(\bm f(\bm{x},\bm{\theta}))], \forall\bm{x}\in
\overline{\mathcal{X}\setminus\mathcal{X}_r}$.

\medskip
\noindent\textbf{Step 7: Target-Set Condition.}

For $\bm{x}\in\mathcal{X}_r$,
\[
\tilde v(\bm{x})
=
w_j(\bm{x})-c
\leq
v(\bm{x})-c
\leq
1-c<1.
\]
By continuity of $\tilde v$, $\tilde v(\bm{x})\leq1-c<1, \forall\bm{x}\in\overline{\mathcal{X}_r}$.

\medskip
\noindent\textbf{Step 8: Exterior Condition.}

For
$\bm{x}\in\widehat{\mathcal{X}}\setminus\mathcal{X}$,
\[
\tilde v(\bm{x})
=
w_j(\bm{x})-c
\leq
-c.
\]
Thus, $(1-\gamma)\tilde v(\bm{x}) \leq -(1-\gamma)c <-\beta$.
By continuity,
\[
(1-\gamma)\tilde v(\bm{x})
\leq
-(1-\gamma)c
<-\beta,
\qquad
\forall\bm{x}\in
\overline{\widehat{\mathcal{X}}\setminus\mathcal{X}}.
\]

Finally, since $\epsilon \in (0,1)$ and $\beta\geq0$,
\[
\frac{\epsilon\beta-\beta}{1-\gamma}
=
-\frac{(1-\epsilon)\beta}{1-\gamma}
\leq0.
\]
Therefore, $\tilde v(\bm{x})>\epsilon \geq \epsilon+\frac{\epsilon\beta-\beta}{1-\gamma},
\qquad
\forall\bm{x}\in\mathcal{X}_0$.

Hence all four inequalities in
\eqref{caes:1}--\eqref{caes:4} hold strictly.
\end{pf}

Lemma \ref{thm:con} establishes a continuous counterpart of the l.s.c. converse characterization. Under Assumptions \ref{ass} and \ref{ass2}, whenever the reach-avoid specification holds with a strict margin for every state in the initial set, there exists a continuous barrier function satisfying the barrier-like condition \eqref{caes:1}--\eqref{caes:4}, with strict inequalities on all relevant regions. These strict margins provide the robustness needed for a further regularity step. Since polynomials are dense in the space of continuous functions on compact sets, the continuous barrier function can be uniformly approximated by a polynomial, leading to the existence of polynomial barrier functions.

\begin{theorem}[Polynomial Barrier Functions]
\label{thm:polynomial}
Under Assumption \ref{ass}, if there exist a polynomial
$p\in\mathbb R[\bm{x}]$,
$\gamma\in(0,1)$, and
$\beta\in[0,\infty)$ satisfying the barrier-like condition \eqref{caes:1}--\eqref{caes:4},
then $\mathbb P_{\mathrm{RA}}(\bm{x})\geq \epsilon, \forall\bm{x}\in\mathcal{X}_0$.
Moreover, under Assumption \ref{ass2}, if the reach-avoid specification holds with a strict margin for every state in the initial set $\mathcal{X}_0$, i.e., $
\mathbb{P}_{\mathrm{RA}}(\bm{x})>\epsilon, \forall\bm{x}\in\mathcal{X}_0$, there exist a
polynomial $p\in\mathbb R[\bm{x}]$,
$\gamma\in(0,1)$, and $\beta\in[0,\infty)$ such that all four
inequalities in \eqref{caes:1}--\eqref{caes:4} hold strictly.
\end{theorem}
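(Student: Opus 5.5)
Sufficiency is immediate: a polynomial $p$ restricted to the compact set $\widehat{\mathcal{X}}$ is continuous, so $p|_{\widehat{\mathcal{X}}}\in C(\widehat{\mathcal{X}})$. The sufficiency part of Theorem \ref{thm:con} then gives $\mathbb P_{\mathrm{RA}}(\bm{x})\geq\epsilon$ for all $\bm{x}\in\mathcal{X}_0$. No further work is needed here, since the barrier conditions only involve values of $p$ on $\widehat{\mathcal{X}}$, using $\bm f(\overline{\mathcal{X}},\Theta)\subseteq\widehat{\mathcal{X}}$.

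For necessity, I would start from the necessity part of Theorem \ref{thm:con}. It gives $\tilde v\in C(\widehat{\mathcal{X}})$, $\gamma\in(0,1)$ and $\beta\ge 0$ such that \eqref{caes:1}--\eqref{caes:4} hold strictly. First I upgrade the strict inequalities to uniform margins. Each constraint set is compact: $\mathcal{X}_0$, $\overline{\mathcal{X}\setminus\mathcal{X}_r}$, $\overline{\mathcal{X}_r}$ and $\overline{\widehat{\mathcal{X}}\setminus\mathcal{X}}$ are closed subsets of the compact $\widehat{\mathcal{X}}$. The slack functions are continuous on them. This is clear for $\tilde v$ itself. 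For the residual $g(\bm{x})=\gamma\mathbb E_{\bm\theta}[\tilde v(\bm f(\bm{x},\bm\theta))]-\tilde v(\bm{x})-\beta$ on $\overline{\mathcal{X}}$, continuity follows by dominated convergence, exactly as in Step 6 of the proof of Theorem \ref{thm:con}. A strictly positive continuous function on a compact set has a positive minimum, so there is $\kappa>0$ such that every one of the four inequalities holds with slack at least $\kappa$.

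Next I invoke the Stone--Weierstrass theorem on the compact set $\widehat{\mathcal{X}}\subset\mathbb R^n$. It gives a polynomial $p$ with $\sup_{\widehat{\mathcal{X}}}|p-\tilde v|<\sigma$, for a $\sigma$ to be chosen below. Each inequality then changes by a bounded amount:
\begin{itemize}
\item \eqref{caes:1} and \eqref{caes:3} change by at most $\sigma$;
\item \eqref{caes:4} changes by at most $(1-\gamma)\sigma$;
\item \eqref{caes:2} changes by at most $\sigma+\gamma\sigma$. The expectation term moves by at most $\gamma\sigma$ because $\bm f(\bm{x},\bm\theta)\in\widehat{\mathcal{X}}$ for all $\bm{x}\in\overline{\mathcal{X}}$ and all $\bm\theta$, so $|\mathbb E_{\bm\theta}[p(\bm f)]-\mathbb E_{\bm\theta}[\tilde v(\bm f)]|\le\sigma$.
\end{itemize}
Choosing $\sigma<\kappa/2$ therefore keeps all four inequalities strict, with the same $\gamma$ and $\beta$. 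The expectation of $p(\bm f(\bm{x},\bm\theta))$ is well defined because $p$ is bounded on $\widehat{\mathcal{X}}$.

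The only delicate point is the uniform-margin step. It needs the expectation to be evaluated only at points inside $\widehat{\mathcal{X}}$, which Assumption \ref{ass}(4) guarantees, and it needs the residual to be continuous up to the closure $\overline{\mathcal{X}\setminus\mathcal{X}_r}$. Both facts are already available from Assumption \ref{ass} and the proof of Theorem \ref{thm:con}, so I expect no real obstacle.
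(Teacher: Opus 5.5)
Your proposal is correct and follows essentially the same route as the paper: sufficiency via continuity of $p$ on $\widehat{\mathcal{X}}$ and Theorem~\ref{thm:con}, and necessity via a Stone--Weierstrass approximation whose sup-norm error (amplified by at most $1+\gamma$ in the Bellman residual) is absorbed into strict margins. The only difference is in how the margins are obtained: you extract a uniform margin $\kappa$ from the strict inequalities in the statement of Theorem~\ref{thm:con}, using compactness of the four sets and continuity of the four slack functions, and so treat that theorem as a black box. The paper instead reads off explicit margins ($r/2$, $c$, $r$) from the construction inside the proof of Theorem~\ref{thm:con}, and uses compactness only on $\mathcal{X}_0$ to get $s_0$.
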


\begin{pf}
\noindent\textbf{1. Sufficiency Direction.}

Since every polynomial is continuous on $\widehat{\mathcal{X}}$, by Theorem \ref{thm:con}, we have $\mathbb P_{\mathrm{RA}}(\bm{x})\geq\epsilon, \forall\bm{x}\in\mathcal{X}_0$.

\medskip
\noindent\textbf{2. Necessity Direction.}

The necessity proof adopts the margin-based approximation idea of Corollary 3 in \cite{xue2019inner} and Lemma 5 in \cite{li2026converse}: a sufficiently accurate approximation is combined with strict margins to absorb the approximation error and preserve the barrier-like inequalities.

By Theorem \ref{thm:con}, under Assumptions \ref{ass} and
\ref{ass2}, there exist
$\gamma\in(0,1)$, $\beta>0$, and
$\tilde v\in C(\widehat{\mathcal{X}})$ satisfying all four 
barrier-like inequalities \eqref{caes:1}--\eqref{caes:4} strictly.

In particular, using the construction in the proof of
Theorem \ref{thm:con}, choose $\beta\in(0,(1-\gamma)m_0)$,
then choose $\frac{\beta}{1-\gamma}<c<m_0$,
and define
\[
r:=(1-\gamma)c-\beta>0.
\]
The approximation parameters can be chosen such that $\gamma\delta_0+\gamma(M-m)\rho<\frac r2$.

The resulting continuous function $\tilde v$ satisfies
\[
\begin{cases}
\tilde v(\bm{x})>\epsilon,& \forall\bm{x}\in\mathcal{X}_0,\\
\gamma
\mathbb{E}_{\bm{\theta}}
[\tilde v(\bm f(\bm{x},\bm{\theta}))]
-\tilde v(\bm{x})
>
\beta+\frac r2, &
\forall\bm{x}\in
\overline{\mathcal{X}\setminus\mathcal{X}_r},\\
\tilde v(\bm{x})\leq1-c<1, &\forall\bm{x}\in\overline{\mathcal{X}_r},\\
(1-\gamma)\tilde v(\bm{x})
\leq-\beta-r<-\beta,&
\forall\bm{x}\in
\overline{\widehat{\mathcal{X}}\setminus\mathcal{X}}.
\end{cases}
\]

Since $\mathcal{X}_0$ is compact and $\tilde v>\epsilon$ on
$\mathcal{X}_0$, there exists $s_0>0$ such that $\tilde v(\bm{x})\geq\epsilon+s_0, \forall\bm{x}\in\mathcal{X}_0$.

By the Stone--Weierstrass theorem, for every $\delta>0$ there exists
a polynomial $p\in\mathbb R[\bm{x}]$ satisfying $\|p-\tilde v\|_{\infty,\widehat{\mathcal{X}}}<\delta$.  Choose $0<\delta<\min\left\{\frac{s_0}{2},\frac{r}{2(1+\gamma)},c, \frac{r}{1-\gamma}\right\}$.

Since $\mathbb P(\bm{x},\cdot)$ is a probability measure,
\[
\begin{aligned}
&\left|
\mathbb{E}_{\bm{\theta}}
[p(\bm f(\bm{x},\bm{\theta}))]
-
\mathbb{E}_{\bm{\theta}}
[\tilde v(\bm f(\bm{x},\bm{\theta}))]
\right|\\
\leq &
\int_{\widehat{\mathcal{X}}}
|p(\bm{y})-\tilde v(\bm{y})|
\mathbb P(\bm{x},d\bm{y})\\
\leq &
\|p-\tilde v\|_{\infty,\widehat{\mathcal{X}}}
<\delta, \qquad \forall \bm{x}\in \overline{\mathcal{X}\setminus \mathcal{X}_r}.
\end{aligned}
\]

For $\bm{x}\in\mathcal{X}_0$, $p(\bm{x})>\epsilon+\frac{s_0}{2}
>\epsilon$. Since $\frac{\epsilon\beta-\beta}{1-\gamma}=
-\frac{(1-\epsilon)\beta}{1-\gamma}\leq 0$,
we obtain
\[
p(\bm{x})
>
\epsilon
>
\epsilon+
\frac{\epsilon\beta-\beta}{1-\gamma},
\qquad
\forall\bm{x}\in\mathcal{X}_0.
\]
Thus \eqref{caes:1} holds strictly.

For
$\bm{x}\in\overline{\mathcal{X}\setminus\mathcal{X}_r}$,
\[
\begin{aligned}
&\gamma
\mathbb{E}_{\bm{\theta}}
[p(\bm f(\bm{x},\bm{\theta}))]
-p(\bm{x})\\
\geq &
\gamma
\mathbb{E}_{\bm{\theta}}
[\tilde v(\bm f(\bm{x},\bm{\theta}))]
-\tilde v(\bm{x})
-(1+\gamma)\delta\\
> &
\beta+\frac r2-(1+\gamma)\delta >\beta,
\end{aligned}
\]
implying $\beta+p(\bm{x})<\gamma\mathbb{E}_{\bm{\theta}}
[p(\bm f(\bm{x},\bm{\theta}))]$ for all
$\bm{x}\in\overline{\mathcal{X}\setminus\mathcal{X}_r}$.
Hence \eqref{caes:2} holds strictly.

For $\bm{x}\in\overline{\mathcal{X}_r}$,
\[
p(\bm{x})
\leq
\tilde v(\bm{x})+\delta
\leq
1-c+\delta
<1,
\]
because $\delta<c$. Hence \eqref{caes:3} holds strictly.

Finally, for
$\bm{x}\in
\overline{\widehat{\mathcal{X}}\setminus\mathcal{X}}$,
\[
\begin{aligned}
(1-\gamma)p(\bm{x})
&\leq
(1-\gamma)\tilde v(\bm{x})
+(1-\gamma)\delta\\
&\leq
-\beta-r+(1-\gamma)\delta\\
&<-\beta,
\end{aligned}
\]
implying $(1-\gamma)p(\bm{x})<-\beta, \forall\bm{x}\in
\overline{\widehat{\mathcal{X}}\setminus\mathcal{X}}$. Thus \eqref{caes:4} holds strictly.

Therefore, there exist a polynomial
$p\in\mathbb R[\bm{x}]$, $\gamma\in(0,1)$, and
$\beta\in [0,\infty)$ satisfying
\eqref{caes:1}--\eqref{caes:4} strictly.
\end{pf}

%% file: sos.tex
\section{Computation via Sum-of-Squares Programming}
\label{sec:sos}

The polynomial barrier functions established in Theorem \ref{thm:polynomial} provide a finite-dimensional representation of the continuous barrier functions satisfying \eqref{caes:1}--\eqref{caes:4} strictly. In this section, we develop a computational procedure for synthesizing such functions for polynomial systems using SOS programming. The key idea is to express the relevant state-space regions as compact basic semialgebraic sets and to replace polynomial positivity conditions with SOS certificates, i.e., representations of positive polynomials in terms of sums of squares and the defining polynomials of the corresponding regions. Under the Archimedean condition, Putinar's Positivstellensatz guarantees the completeness of this positivity certification as the degrees of the SOS multipliers increase.

\subsection{Algebraic Representations and Computational Assumptions}

We assume that the four state-space regions appearing in the barrier-like condition \eqref{caes:1}--\eqref{caes:4} admit compact basic semialgebraic representations:
\[
\begin{cases}
\mathcal{X}_0=
\left\{
\bm{x}\in\mathbb{R}^n
\mid
g_{0,i}(\bm{x})\geq0,\;
i=1,\ldots,k_0
\right\},\\
\overline{\mathcal{X}\setminus\mathcal{X}_r}=
\left\{
\bm{x}\in\mathbb{R}^n
\mid
g_{u,i}(\bm{x})\geq0,\;
i=1,\ldots,k_u
\right\},\\
\overline{\mathcal{X}_r}=
\left\{
\bm{x}\in\mathbb{R}^n
\mid
g_{r,i}(\bm{x})\geq0,\;
i=1,\ldots,k_r
\right\},\\
\overline{\widehat{\mathcal{X}}\setminus\mathcal{X}}=
\left\{
\bm{x}\in\mathbb{R}^n
\mid
g_{e,i}(\bm{x})\geq0,\;
i=1,\ldots,k_e
\right\},
\end{cases}
\]
where $g_{0,i}, g_{u,i}, g_{r,i}, g_{e,i} \in\mathbb{R}[\bm{x}]$
are known polynomials.

For a prescribed degree $d\in\mathbb{N}$, we restrict the search for barrier functions to the finite-dimensional polynomial space
\[
\mathcal{P}_d:=
\left\{
p\in\mathbb{R}[\bm{x}]
\mid
\deg(p)\leq d
\right\}.
\]
The existence result in Theorem \ref{thm:polynomial} guarantees the existence of a polynomial barrier function but does not, by itself, ensure that its one-step expectation can be represented as a polynomial. For computational synthesis, we thus impose the following assumption on the system dynamics and transition operator.

\begin{assumption}
\label{ass:poly}
For every $\bm{\theta}\in\Theta$, the mapping $\bm{x}\mapsto\bm{f}(\bm{x},\bm{\theta})$ is polynomial. Moreover, for the prescribed degree $d$, the
transition operator
\[
(\mathbb{P}p)(\bm{x})
:=
\mathbb{E}_{\bm{\theta}}
\left[
p\bigl(\bm{f}(\bm{x},\bm{\theta})\bigr)
\right]
\]
maps $\mathcal{P}_d$ into $\mathbb{R}[\bm{x}]$; that is, $p\in\mathcal{P}_d \quad\Longrightarrow\quad
\mathbb{P}p\in\mathbb{R}[\bm{x}]$.
\end{assumption}

Under Assumption \ref{ass}, for any $p\in\mathcal{P}_d$, the transition term $\mathbb{P}p$ is a polynomial. Consequently, the one-step residual $\gamma(\mathbb{P}p)(\bm{x})-p(\bm{x})-\beta$ is a polynomial. Hence, each polynomial barrier-like inequality can
be formulated as a polynomial positivity condition on the
corresponding semialgebraic region.

We next introduce the SOS framework used to certify these positivity
conditions. Let
\[
\Sigma[\bm{x}]:=
\left\{
q\in\mathbb{R}[\bm{x}]
\mid
q(\bm{x})=\sum_j h_j^2(\bm{x}),
\quad
h_j\in\mathbb{R}[\bm{x}]
\right\}
\]
denote the cone of SOS polynomials. For a collection of defining
polynomials $G=\{g_1,\ldots,g_k\}$, the associated quadratic module is
\[
\mathcal{M}(G):=
\left\{
\sigma_0(\bm{x})+
\sum_{i=1}^{k}\sigma_i(\bm{x})g_i(\bm{x})
\;\middle|\;
\sigma_0,\ldots,\sigma_k\in\Sigma[\bm{x}]
\right\}.
\]
A quadratic module $\mathcal{M}(G)$ is called \emph{Archimedean} if
there exists a constant $R>0$ such that
\[
R-\|\bm{x}\|_2^2\in\mathcal{M}(G).
\]
The Archimedean property enables the application of Putinar's
Positivstellensatz, which guarantees SOS certificates for polynomials
that are strictly positive on the corresponding semialgebraic set.

We next show how this property can be ensured for the four regions
considered in this paper. Since $\widehat{\mathcal{X}}$ is
compact, there exists $R>0$ such that
\[
\widehat{\mathcal{X}}
\subseteq
\left\{
\bm{x}\in\mathbb{R}^n
\mid
R-\|\bm{x}\|_2^2\geq0
\right\}.
\]
Define the redundant ball constraint
\[
g_{\mathrm b}(\bm{x}):=R-\|\bm{x}\|_2^2.
\]
Because each of the four regions is contained in
$\widehat{\mathcal{X}}$, the constraint $g_{\mathrm b}(\bm{x})\geq0$
holds throughout each region and therefore does not change any of the
four sets. We can thus augment the defining polynomials of each region
with $g_{\mathrm b}$. Specifically, let
\[
G_0=\{g_{0,1},\ldots,g_{0,k_0},g_{\mathrm b}\},
\qquad
G_u=\{g_{u,1},\ldots,g_{u,k_u},g_{\mathrm b}\},
\]
and similarly define
\[
G_r=\{g_{r,1},\ldots,g_{r,k_r},g_{\mathrm b}\},
\qquad
G_e=\{g_{e,1},\ldots,g_{e,k_e},g_{\mathrm b}\}.
\]
By construction, $R-\|\bm{x}\|_2^2= g_{\mathrm b}(\bm{x})
\in\mathcal{M}(G_\ell),
\qquad
\ell\in\{0,u,r,e\}$.
Hence, all four quadratic modules are Archimedean.

The following version of Putinar's Positivstellensatz
\cite{putinar1993positive} now provides the positivity certificates
used in our SOS formulation.

\begin{theorem}[Putinar's Positivstellensatz \cite{putinar1993positive}]
\label{thm:putinar}
Let
\[
G=
\left\{
\bm{x}\in\mathbb{R}^n
\;\middle|\;
g_i(\bm{x})\geq0,\;
i=1,\ldots,k
\right\}
\]
be a basic closed semialgebraic set. Suppose that $\mathcal{M}(G)$ is Archimedean. Then, for every polynomial
$p\in\mathbb{R}[\bm{x}]$ satisfying $p(\bm{x})>0, \forall\,\bm{x}\in G$, there exist SOS polynomials
$\sigma_0,\sigma_1,\ldots,\sigma_k\in\Sigma[\bm{x}]$ such that
\[
p(\bm{x})=\sigma_0(\bm{x})+\sum_{i=1}^{k} \sigma_i(\bm{x})g_i(\bm{x}).
\]
Equivalently, $p\in\mathcal{M}(G)$.
\end{theorem}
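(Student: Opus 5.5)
The plan is to argue by contradiction: separate $p$ from the quadratic module with a linear functional, then show that this functional is integration against a probability measure supported on $G$. Let $M:=\mathcal{M}(G)$, viewed as a convex cone in $\mathbb{R}[\bm{x}]$, which has countable dimension. Suppose $p>0$ on $G$ but $p\notin M$.

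\emph{Step 1: $1$ is an order unit for $M$.} For every $q\in\mathbb{R}[\bm{x}]$ we want some $N>0$ with $N\pm q\in M$. I will first show that the set $H:=\{q \mid \exists N,\ N\pm q\in M\}$ is a subring of $\mathbb{R}[\bm{x}]$. This uses the identities $N^2-q^2=\tfrac12\bigl((N-q)^2(N+q)+(N+q)^2(N-q)\bigr)/N$ together with $M\cdot\Sigma[\bm{x}]\subseteq M$, and I will check these carefully. Next, the Archimedean hypothesis $R-\|\bm{x}\|_2^2\in M$ gives
\[
R+1\pm 2x_i=\bigl(R-\|\bm{x}\|_2^2\bigr)+(x_i\pm1)^2+\sum_{j\neq i}x_j^2\in M,
\]
so every coordinate $x_i$ lies in $H$. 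Since $H$ is a ring containing the constants, $H=\mathbb{R}[\bm{x}]$. Consequently $1$ is an algebraic interior point of $M$.

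\emph{Step 2: Separation.} Since $M$ is a convex cone with an algebraic interior point and $p\notin M$, the Eidelheit--Kakutani separation theorem provides a nonzero linear functional $L$ with $L\geq0$ on $M$ and $L(p)\leq0$. Because $1$ is an order unit, $L(1)>0$, so we may normalize to $L(1)=1$.

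\emph{Step 3: Representing $L$ by a measure.} This is the main obstacle. I would use the GNS construction. Positivity on $\Sigma[\bm{x}]$ makes $\langle a,b\rangle:=L(ab)$ a positive semidefinite form. Multiplication by $x_i$ is then a symmetric operator on the quotient pre-Hilbert space. Since $L\bigl((R-\|\bm{x}\|_2^2)h^2\bigr)\geq0$, we have $\|x_ih\|^2\leq R\|h\|^2$, so these operators are bounded and commuting and extend to commuting bounded self-adjoint operators. The joint spectral theorem then gives a probability measure $\nu$ on the ball of radius $\sqrt R$ with $L(q)=\int q\,d\nu$ for all $q\in\mathbb{R}[\bm{x}]$. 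As an alternative route, I could bound the moments directly, $|L(\bm{x}^\alpha)|\leq R^{|\alpha|/2}$, and apply Haviland's theorem or the Hausdorff moment problem on a box.

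To see that $\operatorname{supp}\nu\subseteq G$, suppose some $\bm{x}^*\in\operatorname{supp}\nu$ has $g_i(\bm{x}^*)<0$. Take a polynomial $h$ that is uniformly close to a continuous bump concentrated near $\bm{x}^*$; this is possible by Stone--Weierstrass on the ball. Then $\int g_ih^2\,d\nu<0$, which contradicts $L(g_ih^2)\geq0$.

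\emph{Step 4: Conclusion.} Now $L(p)=\int_G p\,d\nu$. Since $p$ is continuous and strictly positive on the compact set $G$, it is bounded below there by some $\min_G p>0$, so $L(p)\geq\min_G p>0$. This contradicts $L(p)\leq0$. Hence $p\in\mathcal{M}(G)$, which is exactly a representation $p=\sigma_0+\sum_i\sigma_ig_i$ with SOS multipliers.

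The delicate points are Step 1, namely the ring property of $H$, and the measure-representation and support argument in Step 3. Everything else is standard convexity.
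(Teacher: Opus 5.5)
Your proof is correct. The paper gives no argument of its own; it quotes the theorem from \cite{putinar1993positive}. What you have written is the standard functional-analytic proof, essentially the one behind that citation. The Archimedean element makes $1$ an order unit of $\mathcal{M}(G)$; your identity for $N^2-q^2$ and your expansion of $R+1\pm 2x_i$ both check out. Eidelheit--Kakutani separation then produces $L$. Finally, the GNS construction with the joint spectral theorem turns $L$ into a compactly supported probability measure, and $L(g_ih^2)\geq 0$ forces its support into $G$.

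Compared with the bare citation, your version shows why the paper appends the redundant ball constraint $g_{\mathrm b}$: that single element is exactly what bounds the multiplication operators and makes $1$ an order unit. Because it is a separation argument, it gives no control on the degrees of the $\sigma_i$; quantitative versions need a different, P\'olya-type argument. This is consistent with the purely qualitative ``sufficiently large degree'' claim in Theorem~\ref{thm:sos_completeness}.

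There are two small points to tighten. First, the compactness of $G$ used in Step 4 should be derived from the hypothesis. Every element of $\mathcal{M}(G)$ is nonnegative on $G$, so $G\subseteq\{\bm{x}\mid\|\bm{x}\|_2^2\leq R\}$. Also, $G\neq\emptyset$ (needed for $\min_G p$ to make sense) follows from $\emptyset\neq\mathrm{supp}\,\nu\subseteq G$.

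Second, the parenthetical alternative in Step 3 does not work as stated. Haviland's theorem presupposes $L\geq 0$ on every polynomial nonnegative on the set, which is what you are proving. A growth bound on moments does not by itself verify the Hausdorff positivity conditions either. What positive semidefiniteness together with $|L(\bm{x}^\alpha)|\leq R^{|\alpha|/2}$ does give, via the Berg--Christensen--Ressel theorem on exponentially bounded positive definite sequences, is a compactly supported representing measure. Your main GNS route needs none of this.
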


Thus, for a fixed polynomial degree and fixed multiplier degree, the resulting SOS conditions lead to a finite-dimensional SDP. Moreover, since the corresponding quadratic modules are Archimedean, Putinar's Positivstellensatz guarantees that every polynomial that is strictly positive on one of these regions admits an SOS representation at some finite degree. Consequently, the SOS hierarchy is complete in the sense that every strictly positive polynomial constraint is certified at a finite hierarchy level. They are formally shown in the sequel. The above results have also been widely applied in the existing reachability analysis literature, e.g., \cite{henrion2013convex,xue2019inner}.

\subsection{SOS Reformulation and Soundness}

We now translate the four polynomial barrier-like conditions into SOS
constraints. Fix
\[
\gamma\in(0,1),
\qquad
\epsilon\in[0,1).
\]
Define the polynomial residuals
\begin{align}
q_0(\bm{x})
&:=
p(\bm{x})
-
\left(
\epsilon+
\frac{\epsilon\beta-\beta}{1-\gamma}
\right),
\label{eq:q0-sos}\\
q_u(\bm{x})
&:=
\gamma\mathbb{E}_{\bm{\theta}}
\left[
p(\bm{f}(\bm{x},\bm{\theta}))
\right]
-p(\bm{x})-\beta,
\label{eq:qu-sos}\\
q_r(\bm{x})
&:=1-p(\bm{x}),
\label{eq:qr-sos}\\
q_e(\bm{x})
&:=
-\beta-(1-\gamma)p(\bm{x}).
\label{eq:qe-sos}
\end{align}

The conditions in Theorem \ref{thm:polynomial} are precisely
positivity conditions on these four residual polynomials. We therefore seek $\beta\geq 0$ and SOS multiplier polynomials
\[
\sigma_{0,i},\quad
\sigma_{u,i},\quad
\sigma_{r,i},\quad
\sigma_{e,i}
\in\Sigma[\bm{x}]
\]
such that
\begin{subnumcases}{}
q_0(\bm{x})-
\displaystyle\sum_{i=1}^{k_0+1}
\sigma_{0,i}(\bm{x})g_{0,i}(\bm{x})
\in\Sigma[\bm{x}],
\label{sos:1}\\[2mm]
q_u(\bm{x})
-
\displaystyle\sum_{i=1}^{k_u+1}
\sigma_{u,i}(\bm{x})g_{u,i}(\bm{x})
\in\Sigma[\bm{x}],
\label{sos:2}\\[2mm]
q_r(\bm{x})
-
\displaystyle\sum_{i=1}^{k_r+1}
\sigma_{r,i}(\bm{x})g_{r,i}(\bm{x})
\in\Sigma[\bm{x}],
\label{sos:3}\\[2mm]
q_e(\bm{x})
-
\displaystyle\sum_{i=1}^{k_e+1}
\sigma_{e,i}(\bm{x})g_{e,i}(\bm{x})
\in\Sigma[\bm{x}],
\label{sos:4}
\end{subnumcases}
where $g_{\ell,k_\ell+1}=g_{\mathrm{b}}, \ell\in\{0,u,r,e\}.$

Equivalently, one may explicitly introduce the zeroth-order SOS
multipliers and write
\[
q_\ell(\bm{x})
=
\sigma_{\ell,0}(\bm{x})
+
\sum_{i=1}^{k_\ell+1}
\sigma_{\ell,i}(\bm{x})g_{\ell,i}(\bm{x}),
\qquad
\ell\in\{0,u,r,e\}.
\]

The following theorem shows that feasibility of these SOS constraints
is sufficient for infinite-horizon reach-avoid verification.

\begin{theorem}[Soundness of the SOS Certificate]
\label{thm:sos_soundness}
Suppose Assumption \ref{ass:poly} holds. Fix
$\gamma\in(0,1)$ and $\beta\geq 0$. If there exist
$p\in\mathcal{P}_d$ and SOS multipliers satisfying
\eqref{sos:1}--\eqref{sos:4}, then $p$ satisfies the 
barrier-like conditions \eqref{caes:1}--\eqref{caes:4}. Consequently,
$\mathbb{P}_{\mathrm{RA}}(\bm{x}) \geq \epsilon, \forall\bm{x}\in\mathcal{X}_0$.
\end{theorem}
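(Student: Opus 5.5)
The argument is a pointwise evaluation of the SOS identities on each region, followed by an appeal to Theorem~\ref{thm:polynomial}.

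\textbf{Step 1: rewrite each constraint as an identity.} For each $\ell\in\{0,u,r,e\}$, constraint \eqref{sos:1}--\eqref{sos:4} says there is $\sigma_{\ell,0}\in\Sigma[\bm{x}]$ with
\[
q_\ell(\bm{x})=\sigma_{\ell,0}(\bm{x})+\sum_{i=1}^{k_\ell+1}\sigma_{\ell,i}(\bm{x})g_{\ell,i}(\bm{x}).
\]
This is an identity of polynomials, so it holds for every $\bm{x}\in\mathbb{R}^n$. Under Assumption~\ref{ass:poly}, $\mathbb{E}_{\bm{\theta}}[p(\bm{f}(\bm{x},\bm{\theta}))]=(\mathbb{P}p)(\bm{x})$ is itself a polynomial. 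Hence $q_u$ is a genuine polynomial, and the identity for $\ell=u$ is meaningful pointwise.

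\textbf{Step 2: evaluate on each region.} Take $\bm{x}$ in the semialgebraic set described by $G_\ell$.
\begin{itemize}
\item Each $g_{\ell,i}(\bm{x})\geq0$ there. For $i\leq k_\ell$ this is the defining inequality of the region.
\item For $i=k_\ell+1$ we have $g_{\mathrm b}(\bm{x})\geq0$, because the region lies in $\widehat{\mathcal{X}}$, which lies in the ball of radius $\sqrt{R}$.
\item Every SOS polynomial is nonnegative everywhere.
\end{itemize}
So the right-hand side is nonnegative, and $q_\ell(\bm{x})\geq0$ on the corresponding region.

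Reading off the definitions \eqref{eq:q0-sos}--\eqref{eq:qe-sos}, this gives:
\begin{itemize}
\item $p\geq \epsilon+\frac{\epsilon\beta-\beta}{1-\gamma}$ on $\mathcal{X}_0$;
\item $\beta+p\leq\gamma\mathbb{E}_{\bm{\theta}}[p(\bm f(\cdot,\bm\theta))]$ on $\overline{\mathcal{X}\setminus\mathcal{X}_r}$;
\item $p\leq1$ on $\overline{\mathcal{X}_r}$;
\item $(1-\gamma)p\leq-\beta$ on $\overline{\widehat{\mathcal{X}}\setminus\mathcal{X}}$.
\end{itemize}
These are exactly \eqref{caes:1}--\eqref{caes:4}.

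\textbf{Step 3: conclude.} A polynomial is continuous on $\widehat{\mathcal{X}}$. The sufficiency part of Theorem~\ref{thm:polynomial}, equivalently Theorem~\ref{thm:con}, then applies with this $p$, $\gamma$, $\beta$, and gives $\mathbb{P}_{\mathrm{RA}}(\bm{x})\geq\epsilon$ on $\mathcal{X}_0$. This step needs Assumption~\ref{ass}, which is in force throughout the paper. The problem requires $\epsilon\in(0,1)$; the SOS section writes $\epsilon\in[0,1)$, but for $\epsilon=0$ the conclusion is trivial.

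There is no substantive obstacle. The only point needing care is confirming that each region really satisfies the added ball constraint $g_{\mathrm b}\geq0$, so that the extra multiplier term is nonnegative. This follows from the inclusions of all four regions in $\widehat{\mathcal{X}}$ guaranteed by Assumption~\ref{ass}.
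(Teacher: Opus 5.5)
Your proposal is correct and follows essentially the same route as the paper: you evaluate each SOS identity pointwise on its region, use nonnegativity of the SOS multipliers and of the defining polynomials (including the redundant ball constraint $g_{\mathrm b}$, which you rightly check via containment of each region in $\widehat{\mathcal{X}}$), and then invoke the sufficiency part of Theorem~\ref{thm:con}. Your remarks that this last step implicitly needs Assumption~\ref{ass}, and that the case $\epsilon=0$ is trivial, are accurate points that the paper leaves unstated.
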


\begin{pf}
For any $\bm{x}\in\mathcal{X}_0$,
\[
g_{0,i}(\bm{x})\geq0,
\qquad
\sigma_{0,i}(\bm{x})\geq 0.
\]
Hence $\sum_{i=1}^{k_0+1}\sigma_{0,i}(\bm{x})g_{0,i}(\bm{x}) \geq 0$.

Since the left-hand side of \eqref{sos:1} is an SOS polynomial,
it is nonnegative. Therefore, $q_0(\bm{x})\geq0, \forall\bm{x}\in\mathcal{X}_0$, which is precisely \eqref{caes:1}. The same argument applied to
\eqref{sos:2}--\eqref{sos:4} establishes
\eqref{caes:2}--\eqref{caes:4}. The reach-avoid guarantee then
follows from Theorem \ref{thm:con}.
\end{pf}

\subsection{Completeness of the SOS Formulation}
In this section, we next show that the SOS formulation does not introduce additional conservatism at the level of polynomial barrier functions when sufficiently high degrees are allowed. The key ingredient is Putinar's Positivstellensatz together with the strict polynomial positivity guaranteed by Theorem~\ref{thm:polynomial}.

\begin{theorem}[Completeness of SOS Optimization]
\label{thm:sos_completeness}
Suppose Assumptions \ref{ass} -- \ref{ass:poly} hold, and . If the reach-avoid specification holds with a strict margin, i.e., $
\mathbb{P}_{\mathrm{RA}}(\bm{x})>\epsilon, \forall\bm{x}\in\mathcal{X}_0$, then there exist a
polynomial $p\in\mathbb{R}[\bm{x}]$, a constant
$\gamma\in(0,1)$, a constant $\beta>0$, and finite-degree SOS
multipliers satisfying \eqref{sos:1}--\eqref{sos:4}, provided that the polynomial and multiplier degrees are chosen sufficiently large.
\end{theorem}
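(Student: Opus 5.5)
The plan is to combine the strict polynomial existence result of Theorem~\ref{thm:polynomial} with Putinar's Positivstellensatz (Theorem~\ref{thm:putinar}), applied separately on each of the four regions.

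\textbf{Step 1: strict polynomial certificate.} Under Assumptions~\ref{ass} and~\ref{ass2} and the strict margin $\mathbb{P}_{\mathrm{RA}}>\epsilon$ on $\mathcal{X}_0$, Theorem~\ref{thm:polynomial} supplies a polynomial $p$, $\gamma\in(0,1)$ and $\beta\ge 0$ satisfying \eqref{caes:1}--\eqref{caes:4} strictly. The construction in its proof in fact gives $\beta\in(0,(1-\gamma)m_0)$, so $\beta>0$ as the statement requires. Set $d:=\deg p$ and fix this $p,\gamma,\beta$.

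\textbf{Step 2: the residuals are polynomials.} By Assumption~\ref{ass:poly}, $\mathbb{P}p$ is a polynomial. Hence the four residuals $q_0,q_u,q_r,q_e$ in \eqref{eq:q0-sos}--\eqref{eq:qe-sos} all lie in $\mathbb{R}[\bm{x}]$. The strict inequalities from Step~1 then read
\[
q_0>0 \text{ on } \mathcal{X}_0,\qquad q_u>0 \text{ on } \overline{\mathcal{X}\setminus\mathcal{X}_r},\qquad q_r>0 \text{ on } \overline{\mathcal{X}_r},\qquad q_e>0 \text{ on } \overline{\widehat{\mathcal{X}}\setminus\mathcal{X}}.
\]

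\textbf{Step 3: the main obstacle.} The delicate point is that strictness of \eqref{caes:2} must hold on the closed set $\overline{\mathcal{X}\setminus\mathcal{X}_r}$, with $\mathbb{E}_{\bm\theta}$ evaluated there. Putinar's theorem is applied to the semialgebraic set itself, not to an open interior, so pointwise strict positivity on the closed set is exactly what is needed. Theorem~\ref{thm:con} and Theorem~\ref{thm:polynomial} were stated on the closures precisely to secure this. Two further checks are required:
\begin{itemize}
\item Each residual takes the same value on a set regardless of whether $g_{\mathrm b}$ is added to its defining polynomials, since $g_{\mathrm b}\ge 0$ already holds on each region.
\item Each quadratic module $\mathcal{M}(G_\ell)$, $\ell\in\{0,u,r,e\}$, is Archimedean, as shown before Theorem~\ref{thm:putinar}.
\end{itemize}

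\textbf{Step 4: apply Putinar.} By Theorem~\ref{thm:putinar}, for each $\ell$ there are SOS polynomials $\sigma_{\ell,0},\dots,\sigma_{\ell,k_\ell+1}$ with
\[
q_\ell=\sigma_{\ell,0}+\sum_{i=1}^{k_\ell+1}\sigma_{\ell,i}\,g_{\ell,i}.
\]
Rearranging, $q_\ell-\sum_i\sigma_{\ell,i}g_{\ell,i}=\sigma_{\ell,0}\in\Sigma[\bm{x}]$, which is exactly \eqref{sos:1}--\eqref{sos:4}.

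\textbf{Step 5: finite degrees.} There are finitely many multipliers, each of finite degree. Taking the multiplier degree bound to be at least the maximum of their degrees, and the polynomial degree to be at least $\deg p$, makes the SOS program with these fixed $\gamma,\beta$ feasible.
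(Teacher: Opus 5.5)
Your proposal is correct and follows the same route as the paper's proof: it takes a strictly feasible polynomial certificate with $\beta>0$ from Theorem~\ref{thm:polynomial}, uses Assumption~\ref{ass:poly} to make $q_0,q_u,q_r,q_e$ polynomials, and applies Putinar's Positivstellensatz on each Archimedean module $\mathcal{M}(G_\ell)$; your extra checks (strictness on the closures, redundancy of $g_{\mathrm b}$, $\beta>0$ from the construction) only make explicit what the paper leaves implicit. One caveat, shared with the paper, is that setting $d:=\deg p$ needs Assumption~\ref{ass:poly} to hold at the uncontrolled degree of the Stone--Weierstrass approximant, so that assumption must effectively be read as holding for all sufficiently large $d$.
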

\begin{pf}
By Theorem \ref{thm:polynomial}, under Assumptions \ref{ass} and
\ref{ass2}, there exist a polynomial
$p\in\mathbb{R}[\bm{x}]$, a constant
$\gamma\in(0,1)$, and a constant $\beta>0$ such that all four
polynomial barrier-like inequalities hold strictly on their
respective sets.

Let  $p\in\mathcal{P}_d$. By Assumption \ref{ass:poly}, $(\mathbb{P}p)(\bm{x}) = \mathbb{E}_{\bm{\theta}}
\left[
p(\bm f(\bm{x},\bm{\theta}))
\right]
$
is a polynomial, and hence the residuals
$q_0,q_u,q_r,q_e$ defined in
\eqref{eq:q0-sos}--\eqref{eq:qe-sos} are all polynomials.

Because each corresponding set is compact and the associated residual is strictly positive, by Putinar's Positivstellensatz, $q_\ell
\in\mathcal{M}(G_\ell)$ for $\ell\in\{0,u,r,e\}$.

Since the polynomial $p$ has finite degree and each Putinar
representation uses finitely many polynomial multipliers, sufficiently large polynomial and multiplier degree bounds make the SOS problem feasible.
\end{pf}

%% file: examples.tex
\section{Illustrative Examples}
\label{sec:ex}
In this section, we demonstrate the application of our theoretical developments through two examples. In these two cases, $\bm{f}(\bm{x},\bm{\theta})$ is a polynomial, and the initial set $\mathcal{X}_0$, the safe set $\mathcal{X}$, and the target set $\mathcal{X}_r$ are semialgebraic sets. We aim to search for polynomial barrier functions that satisfy the SOS constraints \eqref{sos:1} -- \eqref{sos:4}. For a fixed $\gamma\in(0,1)$ and $\epsilon \in (0,1)$, the resulting SOS program is a convex SDP in the coefficients of $p$, the scalar $\beta$, and the Gram matrices of the SOS multipliers. As established in \cite{xue2026sufficient,xue2026converse}, the barrier-like condition becomes less conservative as $\gamma$ approaches $1$. This suggests a practical computational strategy of fixing $\gamma$ and solving a sequence of convex SDPs for values of $\gamma$ increasingly close to $1$. In the experiments, the resulting SDPs are  solved  using the tool Mosek 10.1.21 \cite{aps2019mosek}. To ensure numerical stability during the solution of these SDPs, we additionally impose a constraint on the coefficients of the unknown polynomials, specifically restricting them to the interval $[-10^2, 10^2]$.

In addition, to empirically validate the reach-avoid guarantees, we employ a parallelized Monte Carlo approach. We sample $10^4$ initial states from $\mathcal{X}_0$ using a polar grid and simulate $10^4$ independent trajectories from each initial state for $10^3$ steps under uniformly distributed disturbances. Each trajectory is monitored to determine whether it reaches the target set $\mathcal{X}_r$ before leaving the safe set $\mathcal{X}$. For each sampled initial state, we compute the empirical reach-avoid probability and estimate the worst-case probability by taking the minimum over the sampled initial states.

\begin{example}
\label{ex1}
We consider the following system, adapted from Example 1 in \cite{xue2021reach},
\begin{equation*}
\begin{cases}
x(i+1)=x(i)+0.01\left(\begin{split}&-0.5x(i)-0.5y(i)\\
&+0.5x(i)y(i)+\theta_1(i)\end{split}\right),\\
y(i+1)=y(i)+0.01(-0.5y(i)+1+\theta_2(i)),
\end{cases}
\end{equation*}
where $\theta_1(i)\in\Theta_1=[-0.01,0.01]$ and $\theta_2(i)\in\Theta_2=[-10,10]$ are uniform, $\mathcal{X}_0=\{\bm{x}\mid g_{0,1}(x,y) \geq 0\}$ with $g_{0,1}(\bm{x})=-x^2-(y+0.7)^2+0.01$, $\mathcal{X}=\{\bm{x}\mid h(\bm{x})> 0\}$ with $h(\bm{x})=-x^2-y^2+1$,  $\mathcal{X}_r=\{\bm{x}\mid g_{r,1}(\bm{x})>0\}$ with $g_{r,1}(\bm{x})=-10x^2-10(y-0.5)^2+1$, and $\widehat{\mathcal{X}}=\{\bm{x}\mid \widehat{h}(\bm{x})\geq 0\}$ with $\widehat{h}(\bm{x})=-x^2-y^2+1.1$.  The Monte Carlo (\textbf{MC}) simulations yield an empirical worst-case reach-avoid probability of $0.8445$ over the sampled initial states. Three system trajectories, originating in the initial set $\mathcal{X}_0$, are visualized on Fig. \ref{fig:ex1_2}.

It is evident that this example satisfies Assumptions \ref{ass} -- \ref{ass:poly}. Also, this example satisfies Assumption \ref{ass2}. Indeed, the disturbance
$\bm{\theta}=(\theta_1,\theta_2)$ is uniformly distributed on $\Theta=[-0.01,0.01]\times[-10,10]$, with density $q(\bm{\theta})=2.5$. Moreover,
\[
\nabla_{\bm{\theta}}\bm{f}(\bm{x},\bm{\theta})
=
\begin{bmatrix}
0.01 & 0\\
0 & 0.01
\end{bmatrix},
\qquad
\left|\det \nabla_{\bm{\theta}}\bm{f}(\bm{x},\bm{\theta})\right|
=
10^{-4}.
\]
Hence, the sufficient condition for Assumption \ref{ass2} holds with
$Q=2.5$ and $c_J=10^{-4}$, and the transition kernel admits a uniformly
bounded density satisfying
\[
\sup_{\bm{x}\in\mathcal{X}\setminus\mathcal{X}_r}
\|p(\cdot\mid\bm{x})\|_{L^\infty(\widehat{\mathcal{X}})}
\leq
\frac{Q}{c_J}
=
25000.
\]
Therefore, Assumption \ref{ass2} is satisfied.

The verification results via solving SOS program \eqref{sos:1}--\eqref{sos:4} with $g_{\mathrm{b}}=-x^2-y^2+1.1$ are shown in Table \ref{tab:sdp_ex1_set} 

\begin{figure}[h]
\centering
\includegraphics[width=0.4\textwidth, height=5cm]{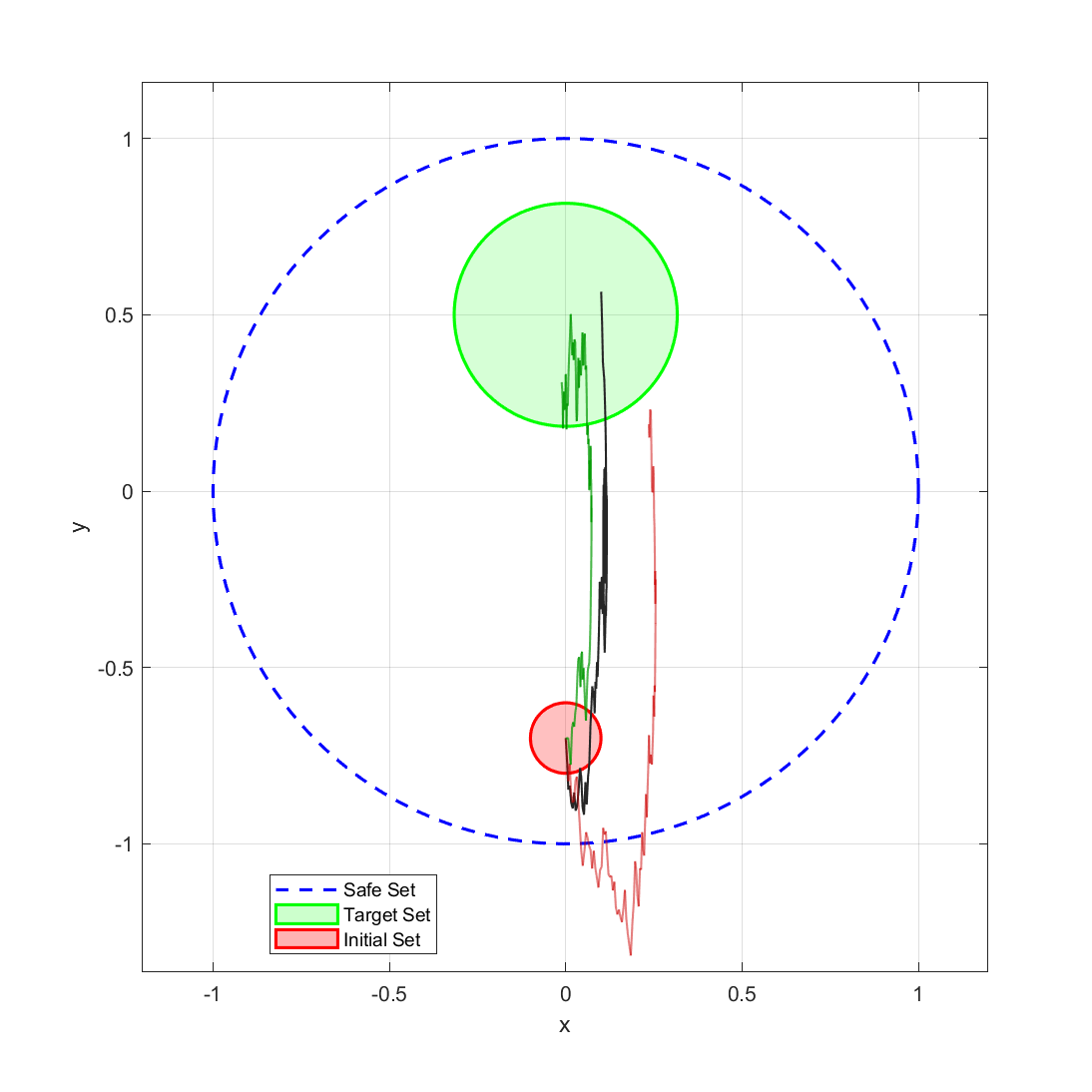}
\caption{An illustration of three system trajectories.}
\label{fig:ex1_2}
\end{figure}

\begin{table}
\caption{\centering Feasibility of SDP \eqref{sos:1}--\eqref{sos:4} for Example \ref{ex1}\\
(\ding{52}: feasible; \ding{55}: infeasible)}
\centering
  \resizebox{0.45\textwidth}{!}{
\begin{tabular}{|c|c|c|c|c|c|}
\hline
Degree&$\gamma$ & $\epsilon=0.70$ & $\epsilon=0.75$ & $\epsilon=0.80$  &$\epsilon=0.83$ 
\\
\hline
8 &0.9  & \ding{55} & \ding{55} & \ding{55} & \ding{55} \\
8 &0.999  & \ding{55} & \ding{55} & \ding{55} & \ding{55} \\
8 &0.9999  & \ding{55} & \ding{55} & \ding{55} & \ding{55} \\
10 &0.9  & \ding{55} & \ding{55} & \ding{55}& \ding{55} \\
10 &0.999  & \ding{52} & \ding{55} & \ding{55}& \ding{55} \\
10 &0.9999  & \ding{52} & \ding{52} & \ding{52}& \ding{55} \\
12 &0.9  & \ding{55} & \ding{55} & \ding{55} & \ding{55} \\
12 &0.999  & \ding{52} & \ding{55} & \ding{55} & \ding{55} \\
12 &0.9999  & \ding{52} & \ding{52} & \ding{52} & \ding{55} \\
14 &0.9  & \ding{55} & \ding{55} & \ding{55} & \ding{55}\\
14 &0.999  & \ding{52} & \ding{55} & \ding{55} & \ding{55}\\
14 &0.9999  & \ding{52} & \ding{52} & \ding{52} & \ding{55}\\
16 &0.9  & \ding{55} & \ding{55} & \ding{55} & \ding{55} \\
16 &0.999  & \ding{52} & \ding{52} & \ding{55} & \ding{55} \\
16 &0.9999  & \ding{52} & \ding{52} & \ding{52} & \ding{52} \\
18 &0.9 & \ding{55} & \ding{55} & \ding{55} & \ding{55} \\
18 &0.999 & \ding{52} & \ding{52} & \ding{55} & \ding{55} \\
18 &0.9999 & \ding{52} & \ding{52} & \ding{52} & \ding{52} \\
20 &0.9 & \ding{55} & \ding{55} & \ding{55} & \ding{55}   \\
20 &0.999 & \ding{52} & \ding{52} & \ding{52} & \ding{52}   \\
\hline
\textbf{MC} & \multicolumn{5}{c|}{\textbf{0.8445}} \\
\hline
\end{tabular}
}
\label{tab:sdp_ex1_set}
\end{table}

\end{example}

\begin{example}
\label{ex2}
We consider the following system, adapted from Example 2 in \cite{xue2021reach},
\begin{equation*}
\begin{cases}
x(l+1)=x(l)+x(l)y(l)+\theta_1(l),\\
y(l+1)=-0.5y(l)+x(l)y(l)+\theta_2(l),
\end{cases}
\end{equation*}
where $\theta_1(l)\in\Theta_1=[-0.01,0.01]$ and $\theta_2(l)\in\Theta_2=[-0.5,0.5]$ are uniform, $\mathcal{X}_0=\{\bm{x}\mid g_{0,1}(x,y) \geq 0\}$ with $g_{0,1}(\bm{x})=-(x+0.5)^2-(y+0.5)^2+0.01$, $\mathcal{X}=\{\bm{x}\mid h(\bm{x})> 0\}$ with $h(\bm{x})=-x^2-y^2+1$,  $\mathcal{X}_r=\{\bm{x}\mid g_{r,1}(\bm{x})>0\}$ with $g_{r,1}(\bm{x})=-x^2-y^2+0.1$, and $\widehat{\mathcal{X}}=\{\bm{x}\mid \widehat{h}(\bm{x})\geq 0\}$ with $\widehat{h}(\bm{x})=-x^2-y^2+3$. The Monte Carlo (\textbf{MC}) simulations yield an empirical worst-case reach-avoid probability of $0.5815$ over the sampled initial states. Three system trajectories, originating in the initial set $\mathcal{X}_0$, are visualized on Fig. \ref{fig:ex2_1}. Similarly, this example satisfies Assumptions \ref{ass}--\ref{ass:poly}.

The verification results via solving SOS program \eqref{sos:1}--\eqref{sos:4} with $g_{\mathrm{b}}=-x^2-y^2+3$ are shown in Table \ref{tab:sdp_ex2_set}.

\begin{figure}[h]
\centering
\includegraphics[width=0.4\textwidth, height=5cm]{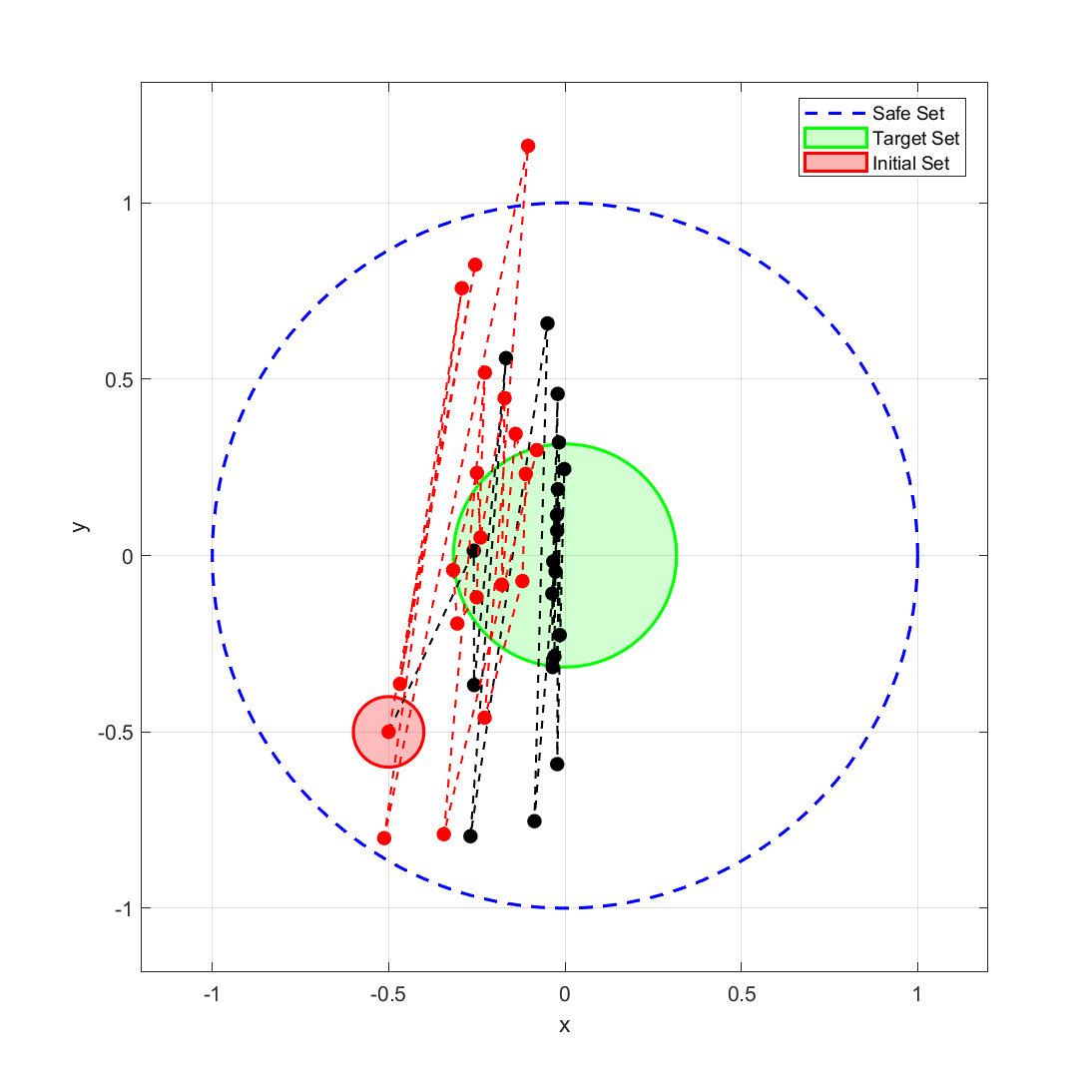}
\caption{An illustration of two system trajectories.}
\label{fig:ex2_1}
\end{figure}

\begin{table}
\caption{\centering Feasibility of SDP \eqref{sos:1}--\eqref{sos:4} for Example \ref{ex2}\\
(\ding{52}: feasible; \ding{55}: infeasible)}
\centering
 \resizebox{0.45\textwidth}{!}{
\begin{tabular}{|c|c|c|c|c|c|}
\hline
Degree&$\gamma$ & $\epsilon=0.30$ & $\epsilon=0.40$ & $\epsilon=0.50$  &$\epsilon=0.55$ 
\\
\hline
8 &0.9  & \ding{55} & \ding{55} & \ding{55} & \ding{55} \\
8 &0.99  & \ding{52} & \ding{52} & \ding{55} & \ding{55} \\ 
8 &0.9999  & \ding{52} & \ding{52} & \ding{55} & \ding{55} \\ 
10 &0.9  & \ding{55} & \ding{55} & \ding{55}& \ding{55} \\
10 &0.999  & \ding{52} & \ding{52} & \ding{55}& \ding{55} \\
10 &0.9999  & \ding{52} & \ding{52} & \ding{55}& \ding{55} \\
12 &0.9  & \ding{55} & \ding{55} & \ding{55} & \ding{55} \\
12 &0.999  & \ding{52} & \ding{52} & \ding{55} & \ding{55} \\
12 &0.9999  & \ding{52} & \ding{52} & \ding{55} & \ding{55} \\
14 &0.9  & \ding{52} & \ding{55} & \ding{55} & \ding{55}\\
14 &0.999  & \ding{52} & \ding{52} & \ding{55} & \ding{55}\\
14 &0.9999  & \ding{52} & \ding{52} & \ding{55} & \ding{55}\\
16 &0.9  & \ding{52} & \ding{55} & \ding{55} & \ding{55} \\
16 &0.999  & \ding{52} & \ding{52} & \ding{55} & \ding{55} \\
16 &0.9999  & \ding{52} & \ding{52} & \ding{52} & \ding{55} \\
18 &0.9 & \ding{52} & \ding{55} & \ding{55} & \ding{55} \\
18 &0.999 & \ding{52} & \ding{52} & \ding{52} & \ding{55} \\
18 &0.9999 & \ding{52} & \ding{52} & \ding{52} & \ding{52} \\
\hline
\textbf{MC} & \multicolumn{5}{c|}{\textbf{0.5815}} \\
\hline
\end{tabular}}
\label{tab:sdp_ex2_set}
\end{table}

\end{example}

The numerical results in Tables \ref{tab:sdp_ex1_set} and \ref{tab:sdp_ex2_set} illustrate several aspects of the theoretical results. First, the feasibility of the SOS conditions is strongly influenced by the choice of the discount parameter $\gamma$. In both examples, taking $\gamma$ closer to $1$ substantially enlarges the range of reach-avoid thresholds that can be certified. For example, in Example \ref{ex1}, no tested threshold is feasible for $\gamma=0.9$, whereas for $\gamma=0.9999$, the threshold $\epsilon=0.80$ is feasible with degree $10$ and $\epsilon=0.83$ is feasible with degree $16$. A similar trend is observed in Example \ref{ex2}: with degree $18$, only $\epsilon=0.30$ is feasible for $\gamma=0.9$, while $\epsilon=0.55$ is feasible for $\gamma=0.9999$. These observations are consistent with the theoretical result that the barrier-like characterization becomes less conservative as $\gamma$ approaches $1$. Thus, in practice, choosing $\gamma$ sufficiently close to $1$ can substantially reduce the conservatism of the resulting convex SDP formulation.

Second, increasing the polynomial degree improves the achievable certification level, particularly when $\gamma$ is sufficiently close to $1$. In Example \ref{ex1}, for $\gamma=0.9999$, the largest certified threshold is $\epsilon=0.80$ for degrees $10$--$14$, and increases to $\epsilon=0.83$ at degree $16$; the same threshold remains feasible at degrees $18$ and $20$. In Example \ref{ex2}, for $\gamma=0.9999$, the largest certified threshold increases from $\epsilon=0.40$ at degrees $8$--$14$ to $\epsilon=0.50$ at degree $16$ and $\epsilon=0.55$ at degree $18$. Thus, higher-degree polynomials can capture increasingly less conservative barrier functions. This behavior is consistent with the completeness result based on polynomial approximation and Putinar's Positivstellensatz: under the stated assumptions and strict feasibility, allowing sufficiently high degrees for the polynomial barrier and SOS multipliers yields a finite level of the hierarchy at which the corresponding polynomial barrier conditions can be certified. At the same time, the tables also show that increasing the degree alone does not compensate for a substantially suboptimal choice of $\gamma$; for example, no tested degree yields a feasible threshold in Example \ref{ex1} for $\gamma=0.9$.

Third, the numerical results show that the certified lower bounds are reasonably close to the empirical reach-avoid probabilities obtained from Monte Carlo simulation. In Example \ref{ex1}, the largest certified threshold is $\epsilon=0.83$, while the empirical worst-case probability over the sampled initial states is $0.8445$, giving a gap of approximately $0.0145$. In Example \ref{ex2}, the largest certified threshold is $\epsilon=0.55$, compared with the empirical value $0.5815$, giving a gap of approximately $0.0315$. These results indicate that the polynomial/SOS certificates provide informative lower bounds for the reach-avoid probability in the two examples.

It is important to emphasize that the Monte Carlo values are empirical estimates and therefore do not constitute formal verification guarantees. In contrast, whenever the SOS program is feasible, the corresponding certified value $\epsilon$ provides, up to numerical errors in solving the SDP, a formal lower bound on the infinite-horizon reach-avoid probability for all initial states in $\mathcal{X}_0$. The Monte Carlo experiments therefore provide an empirical reference for assessing the tightness of the certified lower bounds.  Overall, the numerical experiments illustrate a favorable trade-off between computational tractability and certification tightness: choosing $\gamma$ close to $1$ and allowing higher polynomial and SOS multiplier degrees can lead to stronger certified lower bounds while preserving a sequence of convex SDPs.

%% file: conclusion.tex
\section{Conclusion}
\label{sec:con}
This paper established a sufficient and necessary barrier-like condition involving continuous barrier functions for infinite-horizon reach-avoid verification of stochastic discrete-time systems, under appropriate assumptions, including a uniform absolute continuity condition on the transition kernels. We further established the existence of polynomial barrier functions on compact domains and developed an SOS-based synthesis framework that is complete for the resulting polynomial conditions. Finally, we demonstrated the theoretical results through two numerical examples.

In future work, we will investigate how to relax the assumptions to broaden the applicability of the proposed converse characterization and establish sufficient and necessary barrier-like conditions for  reach-avoid verification of stochastic systems modeled by stochastic differential equations.

%% file: ack.tex
\section*{ACKNOWLEDGMENT}
OpenAI’s GPT-5.6 Luna (free version) \cite{chang2026chatgpt} and Deepseek (free version) \cite{liu2024deepseek} were primarily used to polish the language of this paper prior to submission. No methods or computational results presented in this work were generated by AI models. The authors take full responsibility for the accuracy and integrity of all content reported herein.
